 \DeclareMathOperator*{\argmax}{argmax}
\newcommand{\NE}{\textup{NE}}
\newcommand{\local}{\textup{local}}
\newcommand{\cM}{\mathcal{M}}
\newcommand{\cS}{\mathcal{S}}
\newcommand{\cA}{\mathcal{A}}
\newcommand{\PoA}{\textup{PoA}}
\newcommand{\bE}{\mathbb{E}}
\newcommand{\oQ}{\overline{Q}}
\newcommand{\hQ}{\widehat{Q}}
\newcommand{\oQh}{\overline{Q}_{i,h}^\pi}
\newcommand{\hQh}{\widehat{Q}_{i,h}^\pi}
\newcommand{\oA}{\overline{A}}
\newcommand{\orh}{\overline{r}_{i,h}^\pi}
\newcommand{\hrh}{\widehat{r}_{i,h}^\pi}
\newcommand{\hPh}{\widehat{P}_{i,h}}
\newcommand{\Ph}{P_{i,h}}
\newcommand{\NEgap}{\textup{NE-Gap}}
\newcommand{\placeholder}{4n^2H^3}
\newtheorem{theorem}{Theorem}
\newtheorem{lemma}{Lemma}
\newtheorem{rmk}{Remark}
\newtheorem{defi}{Definition}
\newtheorem{assump}{Assumption}
\title{\LARGE \bf
Markov Games with Decoupled Dynamics: Price of Anarchy and Sample Complexity}
\author{Runyu (Cathy) Zhang, Yuyang Zhang, Rohit Konda, Bryce Ferguson, Jason Marden, Na Li
\thanks{Runyu (Cathy) Zhang, Yuyang Zhang and Na Li are with the John A. Paulson School of Engineering and Applied Sciences, Harvard University (e-mail: runyuzhang@fas.harvard.edu, yuyangzhang@g.harvard.edu, nali@seas.harvard.edu). Rohit Konda, Bryce Ferguson, Jason Marden are with the Department of Electrical and Computer Engineering, University of California, Santa Barbara(e-mail: rkonda@ucsb.edu, blferguson@ece.ucsb.edu, jrmarden@ece.ucsb.edu).}
}
\begin{document}

\maketitle
\thispagestyle{empty}
\pagestyle{empty}

\begin{abstract}

This paper studies the finite-time horizon Markov games where the agents' dynamics are decoupled but the rewards can possibly be coupled across agents. The policy class is restricted to local policies where agents make decisions using their local state. We first introduce the notion of smooth Markov games which extends the smoothness argument for normal form games (\cite{roughgarden2015intrinsic,Chandan19}) to our setting, and leverage the smoothness property to bound the price of anarchy of the Markov game. For a specific type of Markov game called the Markov potential game, we also develop a distributed learning algorithm, multi-agent soft policy iteration (MA-SPI), which provably converges to a Nash equilibrium. Sample complexity of the algorithm is also provided. Lastly, our results are validated using a dynamic covering game. 
\end{abstract}

\section{INTRODUCTION}

Multiagent Markov Decision Processes (MDPs) have found numerous applications, such as autonomous vehicles~\cite{isele2018navigating}, swarm robotics~\cite{huttenrauch2019deep}, collaborative manufacturing~\cite{xia2021digital}, decentralized energy management~\cite{kofinas2018fuzzy}, and social network analysis~\cite{zhang2017social}, among many others. In these problems, a critical question is how individual agents can learn effective strategies collectively in complex multi-agent environments.

Given the success of reinforcement learning (RL) for MDPs, many studies have focused on the empirical and theoretical performance of RL in multi-agent settings (e.g., \cite{resQ, wang2021shaq, SongWhencanwe, JinV-learning, ZhangMPG, Leonardos, li2022congestion, calderone2017infinite}). The study of multi-agent  RL is inherently more complex than that of single-agent RL due to the (strategic) interactions between agents. Consequently, most provable algorithms in multi-agent RL (e.g., \cite{SongWhencanwe, JinV-learning, ZhangMPG, Leonardos}) only consider convergence to Nash equilibria (NEs) or their extensions (e.g., coarse correlated equilibria (CCE)) instead of global optimality. However, in many cases, NE may have poor performance because it only considers the individual rationality of each agent and hence does not necessarily result in the best overall outcome for the system as a whole. Therefore, in addition to understanding the convergence to a NE, it is also crucial to investigate the quality of the NEs compared to the global optimal solution, where all agents in the system coordinate and cooperate to achieve a common objective. 

Due to the difficulty in characterizing which specific NE (or CCE) an algorithm converges to, one way to characterize the quality of solutions is to study the worst-case performance of NEs, such as Price of Anarchy (PoA, \cite{koutsoupias1999worst}), which is defined as the ratio between the performance of the worst NE and the global optimum. The PoA has been well studied in the \textit{static game} settings where there is only a single stage/state. In many applications such as traffic routing~\cite{youn2008price}, resource allocation \cite{paccagnan2019utility, gairing2009covering, gkatzelis2016optimal}, and auctions \cite{hartline2014price}, PoA bounds have been established.  Moreover, \cite{roughgarden2015intrinsic} introduces a general approach for generating the PoA bounds for a class of games called smooth games. 
However, it is unclear how to extend these results to the Markov settings where there are many states and stages.  

Several recent works, such as \cite{radanovic2019learning} and \cite{chen2022convergence}, have extended the smoothness property to the Markov game settings. However, one limitation of their approach is that they define smoothness directly on the value functions, which are long-term accumulative rewards, of the Markov game. Value functions are typically hard to compute, making it difficult to verify the smoothness condition. These studies suggest that analyzing the PoA for Markov games may be fundamentally more challenging than for static games due to the increased complexity of the problem structure. 

\textbf{Our Contributions: } Given the difficulty of studying PoA for general Markov games, this paper focuses on a specific type of Markov games, where one agent's state transition only depends on its own state and action but everyone's stage reward is coupled with each other's state and action. This setting finds many applications such as multi-robots, traffic systems, and sensor networks. 

The first part of this paper is dedicated to addressing the challenges associated with the study of PoA, where we provide both positive and negative results. In the case of the former, we extend the smoothness arguments presented in previous work \cite{roughgarden2015intrinsic, Chandan19} for normal form static games, to identify a sufficient condition for a lower bound on the PoA. This is accomplished by taking advantage of the unique structure of our problem, which allows for verification of smoothness solely for the stage rewards, as opposed to the value functions. Compared with existing conditions presented in \cite{chen2022convergence,radanovic2019learning}, our smoothness condition is significantly easier to verify. In terms of negative results, we provide a counterexample to illustrate that the PoA of the stage rewards alone fails to lower-bound the PoA of the Markov game, thus emphasizing the importance of our proposed smoothness condition.

Furthermore, we develop a novel distributed learning algorithm - multi-agent soft policy iteration (MA-SPI) - for a specific type of Markov game called the Markov potential game (MPG), which provably converges to a Nash equilibrium. Notably, MA-SPI can be implemented in a fully decentralized manner, where agents only require local information to update their policies. We also proved that it takes $\widetilde O\left(\frac{n^4H^7}{c^6\epsilon^4} + \frac{n^2\max_i\!|S_i|^2\max_i\!|A_i|^2H^4}{c^2\epsilon^2}\right)$ samples to learn a policy that is $\epsilon$ close to a NE. Here $n$ is the number of agents, $H$ is the horizon length, $|S_i|,|A_i|$ is the size of $i$-th agent's local state and action space and the constant $c$ represents the level of sufficient exploration of the system states (rigorously defined in Assumption \ref{assump:sufficient-exploration}).

In addition to theoretical analysis, we also evaluate our results using a dynamic covering game as an example, where a group of collaborative agents work together to cover as many rewards as possible. The game's smoothness and price of anarchy were analyzed for three different reward designs: identical-interest, marginal-contribution, and utility-sharing. Additionally, we apply the MA-SPI algorithm to the dynamic covering game and numerically show that MA-SPI converges to a Nash equilibrium.

\section{PROBLEM SETTINGS AND PRELIMINARIES}
\subsection{Markov games with decoupled dynamics}
In this paper, we consider a Markov game model with decoupled dynamics 
\begin{equation}\label{eq:MG-def}
    \cM:= \{\{\cS_i,\cA_i,P_i,r_i,\rho_i\}_{i=1}^n,v,H\},
\end{equation}
 where there are $n$ agents, and $H$ is the horizon of the Markov game.  $s_{i,h}\in\cS_i$ is the local state of agent $i$ at horizon $h$ and $a_{i,h}\in\cA_i$ is the local action. The dynamics of agents are decoupled in the sense that agent $i$'s next state $s_{i,h+1}$ is fully determined by the local action and the current local state $s_{i,h+1}\sim P_{i,h}(\cdot|s_{i,h}, a_{i,h})$. 
 We denote the global state and global action as the concatenation of local states and actions, i.e., $s_h = \{s_{1,h}\dots,s_{n,h}\}, a_h = \{a_{1,h}\dots,a_{n,h}\}$, and thus the state/action space is a product space of the local state/action spaces $\cS = \cS_1\times\cdots\times\cS_n, \cA = \cA_1\times\cdots\times\cA_n$. Agent $i$'s stage reward at horizon $h$ is $r_{i,h}:\cS\times\cA\to[0,1]$, which can possibly depend on other agents' states and actions. Apart from individual rewards, there's a social welfare function $v_h:\cS\times\cA\to\mathbb{R}$ that measures the social welfare of the system at horizon $h$. $\rho_i$ is the initial local state distribution 
 for agent $i$.  A stochastic policy $\pi = \{\pi_h: \cS\rightarrow\Delta(\cA)\}_{h=1}^H$ (where $\Delta(\cA)$ is the probability simplex over $\cA$) specifies a strategy in which agents choose their actions based on the current state in a stochastic fashion.
 Throughout the paper, we consider the \textit{local policy class} $\Pi^\local$, where agents need to take actions independently based on their own local state, i.e., $a_{i,h}\sim \pi_{i,h}(\cdot|s_{i,h})$.
Within the local policy class, the \textit{joint }policy can be written as the product of individual policies, i.e. $\pi_h(a_h|s_h) = \prod_{i=1}^n\pi_{i,h}(a_{i,h}|s_{i,h})$. For a given policy agent $i$'s total reward is denoted as
\begin{equation*}J_i(\pi):=\bE_{s_{j,1}\sim\rho_j}^\pi \sum_{h=1}^{H} r_{i,h}(s_h,a_h),
\end{equation*}
and the total social welfare is
\begin{equation}W(\pi):=\bE_{s_{i,1}\sim\rho_i}^\pi \sum_{h=1}^{H} v_{h}(s_h,a_h).
\end{equation}
Here the notation $\bE^\pi$ denotes the expectation taken over the trajectory by implementing policy $\pi$. Agent $i$'s objective is to maximize its own total reward $J_i$. 

\begin{rmk}[Justification and Limitation of the setting]
The settings described above can find many applications such as sensor coverage, autonomous vehicles, and multi-robotics. In these applications, individual agents (i.e., sensors, vehicles, and robots) have their own dynamics which depend on their own state and action. In Section~\ref{sec:covering}, we provide details on modeling a dynamic covering game using our framework. Nevertheless, one major limitation in our current setting is that the policy class is constrained to fully localized policy whereas in practice, agents could decide their actions based on more information such as neighboring agents' states. We leave it as our important future work to generalize the policy class. \hfill $\square$
\end{rmk}

Throughout the paper, we use the index notation $-i$ to denote all the agents other than agent $i$, for example $\pi_{-i}$ is used to denote $(\pi_1, \dots, \pi_{i-1}, \pi_{i+1}, \dots, \pi_n)$.

 \begin{defi}[Nash equilibrium (NE)]
A policy $\pi^\NE = \{\pi_1^\NE, \cdots, \pi_n^\NE\}$ is called a Nash equilibrium (NE) if for any $i\in [n]$,
\begin{equation*}
    J_i(\pi^\NE) \ge J_i(\pi_i, \pi_{-i}^\NE)
\end{equation*}
We also denote the set of Nash equilibria as $\Pi^{\NE}$.

Given a policy $\pi$, we denote the Nash gap for the policy by:
\begin{align*}
    \NEgap_i(\pi)&:= \max_{\pi_i'} J_i(\pi_i', \pi_{-i}) - J_i(\pi_i, \pi_{-i}).\\
    \NEgap(\pi)&:= \sum_{i=1}^n \NEgap_i(\pi)
\end{align*}
A policy $\pi$ is called an $\epsilon$-NE if $\NEgap(\pi)\le\epsilon$.
\end{defi}

\begin{defi}[Price of anarchy (PoA)] For a Markov game $\cM$ defined as Equation \eqref{eq:MG-def},
\begin{equation*}
   \PoA(\cM):= \frac{\min_{\pi^\NE\in\Pi^\NE}W(\pi^\NE)}{\max_{\pi\in\Pi^\local}W(\pi)}
\end{equation*}
\end{defi}
From the definition of PoA, we can see that for any $\pi^{\NE}$, the performance can be lower bounded by PoA, that is,
$$W(\pi^\NE) \geq \PoA(\cM) \cdot \max_{\pi\in\Pi^\local}W(\pi).$$

In this paper, we mainly focus on two goals: i) firstly, to provide a lower bound for PoA for a class of Markov games so that the bound could be used to bound the performance ratio for any $\pi^\NE$; ii) secondly, to design a distributed reinforcement learning algorithm that enables agents to reach a NE by only using their local information (local states $s_i$, actions $a_i$ and local rewards $r_i$), even without knowing the Markov model $\mathcal{M}$. 

\subsection{Other preliminaries}
Throughout this paper, we will need the following notations and definitions.

We define the state distribution of agent $i$ as $d^\pi_{i,h}:\cS_i \to [0,1]$ such that $d^{\pi_i}_{i,h}(s_i)$ denotes the probability of state $s_i$ being visited at horizon $h$ by implementing policy $\pi$, i.e. $d^{\pi_i}_{i,h}(s_i) = \Pr\{s_{i,h} = s_i|s_{i,1}\sim\rho_i, a_{i,\tau}\sim\pi_i(\cdot|s_{\tau})\}$.
Because the transition probabilities are decoupled and local policies are considered, we have that the total state distribution can be written as:
$d^\pi_h(s) = \Pr\{s_{h} = s\}= \prod_i d^{\pi_i}_{i,h}(s_i).$

Agent $i$'s $Q$-functions is defined as
\begin{equation*}
Q_{i,h}^\pi(s,a):=\bE^\pi \left[\sum_{h'=h}^{H} r_{i,h'}(s_{h'},a_{h'})~\Big|s_h = s, a_h = a\right]
\end{equation*}
Similar to \cite{ZhangMPG}, we define the following `averaged' Q-functions apart from the standard definition of $Q$-functions, which will play an important role in later algorithm design. 

\begin{equation}
   \oQ^\pi_{i,h}(s_i,a_i)
    :=\hspace{-8pt}\mathop{\bE}_{\substack{j\neq i\\ s_j\sim d_{j,\!h}^{\pi_j}(\cdot)}} \mathop{\bE}_{\substack{j\neq i\\ a_j\sim {\pi_j}(\cdot|s_j)}} \hspace{-8pt} Q_{i,h}^\pi(s_i,s_{-i}, a_i, a_{-i})\label{eq:averaged-Q}
\end{equation}
Note that our definition of the ``averaged" Q-function differs from that in \cite{ZhangMPG}. Specifically, we not only take a weighted average over other agents' actions but also their local states.

\section{SMOOTH MARKOV GAMES AND PRICE OF ANARCHY}
Characterizing the exact PoA can be challenging even in the normal form game setting, as it compares the performance ratio of the \emph{worst} NE against the global optimal. The smoothness argument \cite{roughgarden2015intrinsic} serves as a useful tool for PoA analysis which can give a canonical PoA lower bound. In this section, we first identify a class of Markov games named {$(\lambda,\mu)$-generalized smooth Markov games}, which is an extension of the smoothness argument in \cite{Chandan19,roughgarden2015intrinsic} for normal form games, and show that {$(\lambda,\mu)$-generalized smooth Markov games} have a uniform lower bounds on the PoA. 

\begin{defi}[{($\lambda,\mu$)-generalized smoothness}]\label{defi:smoothness} For $\mu \ge 0, \lambda > 0$, a Markov game $\cM$ defined as in Equation \eqref{eq:MG-def} is called {$(\lambda,\mu)$-generalized smooth} if the following inequality holds for any state and action pairs $(s_h,a_h), (s_h^\star, a_h^\star)$
    \begin{equation*}
    \begin{split}
        &\sum_{i=1}^n r_{i,h}(s_h,a_h) - r_{i,h}(s_{i,h}^\star, s_{-i,h}, a_{i,h}^\star, a_{-i,h}) \\
        &\le (1+\mu)v_h(s_h,a_h) - \lambda v_h(s_h^\star, a_h^\star),~~\forall h\in [H],
    \end{split}
    \end{equation*}
\end{defi}

\textcolor{black}{Definition \ref{defi:smoothness} is an extension of the generalized smoothness introduced in \cite{Chandan19} to the Markov game setting. The definition implies that at each horizon $h$, the normal form game formed induced by stage reward $\{r_{i,h}\}_{i=1}^n$ and social welfare function $v_h$ satisfies the generalized smoothness argument in \cite{Chandan19}. Notably, this condition is much easier to verify than the smoothness conditions in \cite{radanovic2019learning, chen2022convergence} as it only requires checking the reward and welfare functions at each horizon without any value function computation. The following theorem shows that we can obtain a similar bound on PoA for the extension to the Markov game setting.}

\begin{theorem}\label{thm:PoA-smooth-games}
    Any {$(\lambda,\mu)$-generalized smooth Markov game} $\cM$ satisfies the following PoA bound:
    \begin{equation*}
        \PoA(\cM) \ge \frac{\lambda}{(1+\mu)}
    \end{equation*}
\begin{proof}
We use $\pi^\star = (\pi_1^\star, \dots,\pi_2^\star)$ to denote the globally optimal policy in the local policy class. From the definition of {$(\lambda,\mu)$-generalized smoothness},
    \begin{align}
    &0 \le \sum_{i=1}^n (J_i(\pi^\NE) - J_i(\pi_i^\star,\pi_{-i}^\NE))\label{eq:proof-1}\\
    &=\sum_{i=1}^n\bE_{\pi^\NE} \sum_{h=1}^H r_{i,h}(s_h,a_h) - \notag\\
    &\quad \bE_{(\pi_i^\star,\pi_{-i}^\NE)}\sum_{h=1}^H r_{i,h}(s_{i,h}^\star,s_{-i,h},a_{i,h}^\star,a_{-i,h})\label{eq:proof-2}\\
    &=\bE_{s\sim\pi^\NE, s^\star\sim\pi^\star} \sum_{h=1}^H\Bigg(\sum_{i=1}^n(r_i(s_h,a_h) \notag\\
    &\quad - r_i(s_{i,h}^\star,s_{-i,h},a_{i,h}^\star,a_{-i,h}))\Bigg)\label{eq:proof-3}\\
    &\le{\bE}_{s\sim\pi^\NE, s^\star\sim\pi^\star} \!\sum_{h=1}^H\!\left((\mu \!+\! 1)v(s_h,a_h) \!-\! \lambda v(s_h^\star,a_h^\star)\right)\label{eq:proof-4}\\
    &= (\mu+1) W(\pi^\NE) - \lambda W(\pi^\star),\notag
\end{align}
where the inequality \eqref{eq:proof-1} is derived by the definition of NE, \eqref{eq:proof-2} to \eqref{eq:proof-3} holds because the trajectory $\{s_{i,h}, a_{i,h}\}_{h=1}^H$ of agent $i$ is independent of any other agents $\{s_{-i,h}, a_{-i,h}\}_{h=1}^H$, \eqref{eq:proof-3} to \eqref{eq:proof-4} holds because of the smoothness condition. Thus we have
\begin{equation*}
    \frac{W(\pi^\NE)}{W(\pi^\star)}\ge \frac{\lambda}{\mu + 1}, ~~\Longrightarrow ~\PoA(\cM) \ge \frac{\lambda}{\mu + 1}.
\end{equation*}

\end{proof}
\end{theorem}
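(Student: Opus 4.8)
The plan is to adapt the classical smoothness argument of \cite{roughgarden2015intrinsic} to the Markov setting, with the decoupled dynamics supplying exactly the structural property that makes the stage-wise smoothness condition sufficient. First I would fix an arbitrary Nash equilibrium $\pi^\NE \in \Pi^\NE$ and let $\pi^\star = (\pi_1^\star, \dots, \pi_n^\star)$ be a globally optimal policy in $\Pi^\local$. The NE condition says no agent benefits from a unilateral deviation; in particular, deviating to the optimal local policy $\pi_i^\star$ cannot increase agent $i$'s reward, so $J_i(\pi^\NE) \ge J_i(\pi_i^\star, \pi_{-i}^\NE)$ for every $i$. Summing these $n$ inequalities gives $\sum_{i=1}^n \left( J_i(\pi^\NE) - J_i(\pi_i^\star, \pi_{-i}^\NE) \right) \ge 0$, which is the quantity I will bound from above.

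Next I would expand each total reward into its horizon-wise stage rewards and exploit the decoupled dynamics. Because transitions are local ($s_{i,h+1} \sim P_{i,h}(\cdot\mid s_{i,h}, a_{i,h})$) and policies are local, the trajectory of agent $i$ depends only on its own policy: when only agent $i$ switches from $\pi_i^\NE$ to $\pi_i^\star$, its state-action marginal becomes the one induced by $\pi_i^\star$, while the joint law of the remaining agents stays exactly at the NE distribution $\prod_{j\neq i} d^{\pi_j^\NE}_{j,h}$. This is the crux of the argument, since it lets me replace the $n$ separate deviation terms by a single joint expectation in which each agent $i$'s coordinate is sampled from the optimal trajectory and the coordinates $-i$ from the NE trajectory. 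Concretely, using the product form $d^\pi_h(s) = \prod_i d^{\pi_i}_{i,h}(s_i)$, I would draw $s \sim \pi^\NE$ and an independent $s^\star \sim \pi^\star$ and write the summed difference as $\bE \sum_{h=1}^H \sum_{i=1}^n \left( r_{i,h}(s_h, a_h) - r_{i,h}(s_{i,h}^\star, s_{-i,h}, a_{i,h}^\star, a_{-i,h}) \right)$.

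With the coupling in place, the remainder is mechanical: I would invoke the $(\lambda,\mu)$-generalized smoothness inequality of Definition~\ref{defi:smoothness} pointwise at each horizon $h$ and each realized pair $(s_h,a_h),(s_h^\star,a_h^\star)$, bounding the inner sum over $i$ by $(1+\mu) v_h(s_h, a_h) - \lambda v_h(s_h^\star, a_h^\star)$. Taking the outer expectation and summing over $h$ converts these into the welfare functions, yielding $(1+\mu) W(\pi^\NE) - \lambda W(\pi^\star) \ge 0$. Rearranging gives $W(\pi^\NE)/W(\pi^\star) \ge \lambda/(1+\mu)$; since $\pi^\NE$ was an arbitrary equilibrium and $\pi^\star$ is the global optimum, the same bound holds for the worst NE, establishing $\PoA(\cM) \ge \lambda/(1+\mu)$.

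The step I expect to be the main obstacle is the decoupling argument that permits combining the $n$ deviation terms into one product-form joint expectation. In a general Markov game it is incorrect to treat agent $i$'s deviation as leaving the other agents' reward-relevant distributions untouched, because a deviation would ripple through the shared transition kernel; here it is precisely the decoupled transition structure together with the local policy class that guarantees the marginal of the non-deviating agents is unchanged and factorizes. I would therefore state this independence explicitly before passing to the coupled expectation, as it is the sole place where the decoupled-dynamics hypothesis is consumed.
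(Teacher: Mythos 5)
Your proposal is correct and follows essentially the same route as the paper's own proof: sum the Nash deviation inequalities, use the decoupled dynamics and local policy class to replace the $n$ unilateral-deviation expectations by a single coupled expectation with $s\sim\pi^\NE$ and an independent $s^\star\sim\pi^\star$, apply the stage-wise $(\lambda,\mu)$-generalized smoothness pointwise, and rearrange to get $\PoA(\cM)\ge\lambda/(1+\mu)$. You also correctly identify the coupling step as the only place where the decoupled-dynamics hypothesis is consumed, which is exactly the justification the paper gives for passing from \eqref{eq:proof-2} to \eqref{eq:proof-3}.
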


\begin{rmk} The proof of Theorem \ref{thm:PoA-smooth-games} can be easily extend to $\epsilon$-NEs and thus a performance lower bound can also be derived for $\epsilon$-NEs:
\begin{align*}
    -\epsilon \le (\mu+1) W(\pi^{\epsilon\textup{-}\NE})) - \lambda W(\pi^\star)\\
    \Longrightarrow W(\pi^{\epsilon\textup{-}\NE}))\ge \frac{\lambda}{\mu+1}W(\pi^\star) - \frac{\epsilon}{\mu+1}
\end{align*}
\end{rmk}

\subsection{A counter-example for general Markov games}

Theorem \ref{thm:PoA-smooth-games} suggests that as long as the stage rewards of the Markov game form a {$(\lambda, \mu)$-generalized smooth game} for every horizon $h$, then the PoA bound analysis can be generalized from the static normal form game to the Markov game. Given
this positive result, a natural follow-up question is, whether the statement can be generalized to general Markov games in the sense that stage-wise PoA lower bound can also bound the PoA for the Markov game which is not necessarily a smooth game. Unfortunately, this statement is not necessarily true. 

 \begin{wrapfigure}{r}{.22\textwidth}
\centering
\vspace{-15pt}
\includegraphics[width=.22\textwidth]{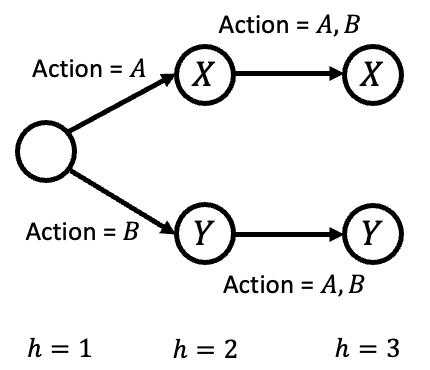}
\vspace{-15pt}
\caption{State Transition}
\vspace{-5pt}
\label{fig:state-transition}
\end{wrapfigure}We can construct a simple counter-example with $n=2, H=3$. In this counter-example, the reward functions $r_{i,h}$'s only depend on states (but not actions), and for any $1 \le h\le H$, $\{r_{i,h}(\cdot)\}_{i=1}^n$ forms a normal form game on the state space with $\PoA(\{r_{i,h}(\cdot)\}_{i=1}^n)\ge 1/2$, yet the PoA for the Markov game $\PoA(\cM)= 7/16 < 1/2$. The counter-example is constructed as follows: both agent $1$ and $2$ follow the (local) state transition dynamics as shown in Figure \ref{fig:state-transition}. The reward table for $h=2$ and $h=3$ is shown as in Table \ref{tab:reward-h-2}, \ref{tab:reward-h-3}. It can be verified that the PoA is $1/2$ for Table \ref{tab:reward-h-2} (the worst NE is a mixed NE where both agents play $X$ and $Y$ with probability $1/2$) and $1$ for Table \ref{tab:reward-h-3} (there's a unique NE which is also the global optimal).  From another perspective, the Markov game is equivalent to a normal form game on the action space at $h=1$, whose reward table is shown in Table \ref{tab:reward-equivalence}. Thus the PoA for the dynamic game is the same as Table \ref{tab:reward-equivalence}, which is $\PoA(\cM) = 7/16 < 1/2$ (the worst NE is a mixed NE where both agents play $A$ with probability $3/4$ and $Y$ with probability $1/4$). 



\begin{table}[htbp]
\parbox{.3\linewidth}{
\centering
\begin{tabular}{c|cc}
     & X &Y  \\
     \hline
     X & 2 & 0\\
     Y & 0 & 2\\
\end{tabular}
\caption{Reward \\ table for $h=2$}
\label{tab:reward-h-2}
}
\parbox{.3\linewidth}{
\centering
\begin{tabular}{c|cc}
     & X &Y  \\
     \hline
     X & 0 & 1\\
     Y & 1 & 2\\
\end{tabular}
\caption{Reward\\ table for $h=3$}
\label{tab:reward-h-3}
}
\parbox{.3\linewidth}{
\centering
\begin{tabular}{c|cc}
     & A &B  \\
     \hline
     A & 2 & 1\\
     B & 1 & 4\\
\end{tabular}
\caption{Equivalence to a normal form game}
\label{tab:reward-equivalence}
}
\end{table}

\section{LEARNING THE NASH EQUILIBRIUM FOR MARKOV POTENTIAL GAMES}
The PoA lower bound proved by the previous section suggests that as long as an algorithm finds a NE for a {$(\lambda, \mu)$-generalized smooth Markov game}, it is at most $\lambda/(\mu+1)$ suboptimal in terms of the performance ratio. However, finding a NE is known to be intractable for general sum games \cite{chen2009settling,daskalakis2009complexity}, thus as a starting point, in this paper we mainly focus on an important subclass, namely the Markov potential game, and propose a multi-agent soft policy iteration (MA-SPI) algorithm that learns a NE efficiently using samples.
\begin{defi}\textit{(Markov Potential Game (MPG, \cite{Macua18,Gonzalez13,ZhangMPG,Leonardos}))}\label{defi:MPG}
A Markov game $\cM$ defined as Equation \eqref{eq:MG-def} is a Markov potential game if there exists stage potential functions $\{\phi_h: \cS\times\cA\to \mathbb{R}\}_{h=1}^H$ such that for all agents' rewards and state-action pairs at all horizon $1\le h \le H$, 
\begin{equation}\label{eq:MPG-phi}
\begin{split}
        r_{i,h}(s_{i,h}', s_{-i,h}, a_{i,h}', a_{-i,h}) &- r_{i,h}(s_{i,h}, s_{-i,h}, a_{i,h}, a_{-i,h})\\
        = \phi_{h}(s_{i,h}', s_{-i,h}, a_{i,h}', a_{-i,h}) &- \phi_{h}(s_{i,h}, s_{-i,h}, a_{i,h}, a_{-i,h}).
\end{split}
\end{equation}    
\end{defi}
From the definition of MPG, we can define the total potential function as a function on the policy space
\begin{equation}\label{eq:MPG-Phi}
 \Phi(\pi):=  \bE_{s_{i,1}\sim\rho_i}^\pi \sum_{h=0}^{H} \phi_{h}(s_h,a_h),
\end{equation}
and it is not hard to verify that for any $(\pi_i,\pi_{-i}),(\pi_i',\pi_{-i})\in \Pi^\local$ the total potential function satisfies:
\begin{equation}\label{eq:MPG-def}
   J_i(\pi_i,\pi_{-i}) - J_i(\pi_i',\pi_{-i}) =\Phi(\pi_i,\pi_{-i}) - \Phi(\pi_i',\pi_{-i}).
\end{equation}

The notion of the Markov potential game is first proposed for continuous dynamical systems in \cite{Macua18,Gonzalez13}. The idea is then generalized to the Markov game setting in \cite{ZhangMPG,Leonardos}, where Equation \eqref{eq:MPG-def} is directly used as the definition of MPG. However, Equation \eqref{eq:MPG-def} 
is hard to check for general Markov games. In contrast, thanks to the decoupled dynamics and local policy class considered in this paper, the condition can be further simplified as Definition \ref{defi:MPG}, whose condition is defined over stage rewards, much easier to verify.

For the learning algorithm, we also make the following assumption on the MPG.
\begin{assump}[Sufficient Exploration]\label{assump:sufficient-exploration}
For any policy $\pi$, there exists a uniform constant ~$c > 0 $~ such that ~$d_{i,h}^\pi(s_i) > c, ~\forall~i,h,s_i$. 
\end{assump}
{Assumption \ref{assump:sufficient-exploration} requires that any state at any horizon $h$ has a positive probability of being visited. Similar assumptions are standard for proving convergence for sample-based RL algorithms (e.g., \cite{agarwal2021theory, Qu20}). We would like to note that Assumption \ref{assump:sufficient-exploration} is admittedly stronger than assumptions in \cite{agarwal2021theory} (they only require a certain discounted state visitation distribution to be positive, yet we require the state visitation distribution at \emph{every} horizon $h$ to be positive). We leave it to future work to extend this assumption. } 

\subsection{Algorithm design: multi-agent soft policy iteration}
Our algorithm can be summarized as the following soft policy iteration update: 
\begin{equation}\label{eq:inexact-soft-PI}
\begin{split}
       \pi_{i,h}^{(t+1)}(a_i|s_i) &=(1-\eta_t) \pi_{i,h}^{(t)}(a_i|s_i)+\\ &\eta_t \mathbf{1}\{a_i = \argmax_{a_i'} \hQ_{i,h}^{(t)}(s_i,a_i')\}.
\end{split}
\end{equation}
Here $\hQ_{i,h}^{(t)}$ is an estimation of the averaged-Q function $\oQ_{i,h}^{(t)}$ defined in Equation \eqref{eq:averaged-Q}. {Note that when the stepsize $\eta_t = 1$, the algorithm is equivalent to each agent executing the policy iteration algorithm, i.e., choosing greedy actions that maximize the current estimation of its own averaged-Q functions. However, for multi-agent systems, the greedy update with $\eta_t=1$ might cause miscoordination and thus fail to converge, hence we introduce the soft policy iteration by setting the next iteration's policy as a convex combination of the current policy and the greedy policy.}

Given Equation \eqref{eq:inexact-soft-PI}, the key part of the algorithm lies in how to estimate the averaged-Q functions. To start with, we present the following property of the averaged Q-functions (the proof is deferred to the Appendix):
\begin{lemma}\label{lemma:averaged-Q-bellman-eq}
    The averaged Q-functions $\oQ_{i,h}^\pi$ in Equation \eqref{eq:averaged-Q} satisfies the following Bellman equation:
    \begin{equation}\label{eq:bellman-eq}
    \begin{split}
        &\oQ_{i,h}^\pi(s_i,a_i) =\orh(s_i,a_i) +\\ & ~~\sum_{s_i',a_i'}P_{i,h}(s_i'|s_i,a_i)\pi_{i,h+1}(a_i'|s_i')\oQ_{i,h+1}^\pi(s_i',a_i'),
    \end{split}
    \end{equation}
    where
    \begin{equation*}
    \begin{split}
        &\orh(s_i,a_i) := \hspace{-8pt}\mathop{\bE}_{\substack{j\neq i\\ s_j\sim d_{j,\!h}^{\pi_j}(\cdot)}} \mathop{\bE}_{\substack{j\neq i\\ a_j\sim {\pi_j}(\cdot|s_j)}} \hspace{-8pt}r_{i,h}(s_i,s_{-i}, a_i, a_{-i})
    \end{split}
    \end{equation*}
\end{lemma}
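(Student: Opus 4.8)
The plan is to reduce this to the ordinary (joint) Bellman recursion for the standard $Q$-function $Q_{i,h}^\pi$ and then apply the averaging operator over the other agents' states and actions. Recall the textbook identity $Q_{i,h}^\pi(s,a) = r_{i,h}(s,a) + \sum_{s',a'} P_h(s'|s,a)\,\pi_{h+1}(a'|s')\,Q_{i,h+1}^\pi(s',a')$, valid for any policy and any reward. First I would invoke the decoupled structure to factor the joint transition as $P_h(s'|s,a) = \prod_j P_{j,h}(s_j'|s_j,a_j)$ and the local-policy structure to factor $\pi_{h+1}(a'|s') = \prod_j \pi_{j,h+1}(a_j'|s_j')$, and then split off the agent-$i$ factors $\Ph(s_i'|s_i,a_i)$ and $\pi_{i,h+1}(a_i'|s_i')$ from the products over $j\neq i$.

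Next I would apply the averaging operator $\bE_{s_{-i}\sim d_{-i,h}^{\pi_{-i}}}\bE_{a_{-i}\sim\pi_{-i}(\cdot|s_{-i})}$ (the same operator that defines $\oQh$ from $Q_{i,h}^\pi$, where $d_{-i,h}^{\pi_{-i}} := \prod_{j\neq i} d_{j,h}^{\pi_j}$) to both sides of the factored recursion. On the reward term this operator reproduces exactly the definition of $\orh(s_i,a_i)$. On the continuation term, the agent-$i$ factors $\Ph(s_i'|s_i,a_i)$ and $\pi_{i,h+1}(a_i'|s_i')$ do not depend on $s_{-i},a_{-i}$, so they pull outside the averaging, leaving $\sum_{s_i',a_i'}\Ph(s_i'|s_i,a_i)\,\pi_{i,h+1}(a_i'|s_i')$ times an inner expectation over the $j\neq i$ transitions and next-step policies applied to $Q_{i,h+1}^\pi$.

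The crux of the argument, and the step I expect to require the most care, is showing that this inner expectation equals $\oQ_{i,h+1}^\pi(s_i',a_i')$. This amounts to verifying the state-visitation flow identity for each agent $j\neq i$, namely $d_{j,h+1}^{\pi_j}(s_j') = \sum_{s_j,a_j} d_{j,h}^{\pi_j}(s_j)\,\pi_{j,h}(a_j|s_j)\,P_{j,h}(s_j'|s_j,a_j)$, and then taking the product over $j\neq i$ to conclude $d_{-i,h+1}^{\pi_{-i}}(s_{-i}') = \prod_{j\neq i} d_{j,h+1}^{\pi_j}(s_j')$. Here it is essential that the dynamics are decoupled and the policies are local, so that the joint distribution over $s_{-i}'$ factorizes and each marginal evolves by its own one-step recursion independent of agent $i$. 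Substituting this identity collapses the inner ``average over the $j\neq i$ transition, then over $\pi_{-i,h+1}$'' into precisely the horizon-$(h{+}1)$ averaging operator applied to $Q_{i,h+1}^\pi$, which is $\oQ_{i,h+1}^\pi(s_i',a_i')$.

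Finally I would assemble the pieces: the averaged reward term $\orh$ plus the collapsed continuation term yields exactly the claimed recursion \eqref{eq:bellman-eq}. The one subtlety worth double-checking is the mutual consistency of the averaging distributions used at horizons $h$ and $h+1$ under a single transition step, which is exactly what the flow identity guarantees; everything else is bookkeeping in separating the $i$ and $-i$ factors.
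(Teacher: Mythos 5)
Your proposal is correct and follows essentially the same route as the paper's proof: expand via the joint Bellman recursion, factor the transition and policy using the decoupled/local structure, and collapse the expectation over the other agents using the state-visitation flow identity $d_{j,h+1}^{\pi_j}(s_j')=\sum_{s_j,a_j}d_{j,h}^{\pi_j}(s_j)\pi_{j,h}(a_j|s_j)P_{j,h}(s_j'|s_j,a_j)$, which is exactly the key step in the paper's chain of equalities. The only difference is cosmetic ordering (you apply the averaging operator to the factored recursion, while the paper expands the definition of $\oQ_{i,h}^\pi$ first), so no further changes are needed.
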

The key idea of our sample-based algorithm is to estimate $\oQh(s_i,a_i)$ and then perform soft policy iteration (Equation \eqref{eq:inexact-soft-PI}). Given Equation \eqref{eq:bellman-eq}, we start by estimating $\orh(s_i,a_i)$ and $\Ph(s_i'|s_i,a_i)$ and then calculate the estimated averaged-Q functions using Equation \eqref{eq:bellman-eq}.

In our algorithm, there are two different types of data collection processes, namely the pre-data collection and the on-policy data collection. In the pre-data collection step, the dataset $D_K$ is collected by setting each agent's policy  the uniform random policy. The on-policy data collection step is collected by running a specific policy, for each policy $\pi^{(t)}$, the dataset $D_J^{(t)}$ is collected by implementing $\pi^{(t)}$. 

The estimated state transition probability $\hPh(s_i'|s_i,a_i)$ uses the dataset $D_K$ from the pre-data collection which contains $T_K$ samples $D_K = \{\{s_{i,h}^{(k)}, a_{i,h}^{(k)}\}_{h=1}^H\}_{k=1}^{T_K}\in D_K$, i.e., 
\begin{equation}\label{eq:P-hat}
\hPh(s_i'|s_i,a_i):=\!
\begin{cases}
\!\frac{\sum_{k=1}^{T_K} \!\mathbf{1}\{s_{i,h+1}^{(k)} \!\!=\! s_i', s_{i,h}^{(k)}\!=\!s_i, a_{i,h}^{(k)} \!=\! a_i\}}{\sum_{k=1}^{T_K}\mathbf{1}\{s_{i,h}^{(k)}=s_i, a_{i,h}^{(k)} = a_i\}},\!\\
\textup{if} ~\sum_{k=1}^{T_K}\mathbf{1}\{s_{i,h}^{(k)}=s_i, a_{i,h}^{(k)} = a_i\} \ge 1;\\
\mathbf{1}\{s_i'=s_i\}, \\
\textup{if} ~\sum_{k=1}^{T_K}\mathbf{1}\{s_{i,h}^{(k)}=s_i, a_{i,h}^{(k)} = a_i\} =0.
\end{cases}
\end{equation}

The averaged reward $\orh(s_i,a_i)$ for each given policy $\pi$ is estimated using the on-policy dataset $D_J$ which contains $T_J$ samples $D_J^{(t)}= \{\{s_{i,h}^{(k)},a_{i,h}^{(k)}\}_{h=1}^H\}_{k=1}^{T_J}$, i.e.
\begin{equation}
\hrh(s_i,a_i):=
\begin{cases} \label{eq:r-hat}
\!\frac{\sum_{k=1}^{T_J} \!r_{i,h}(s_i,a_i,s_{-i,h}^{(k)},a_{-i,h}^{(k)})\mathbf{1}\{s_{i,h}^{(k)}\!=\!s_i\}}{\sum_{k=1}^{T_J}\mathbf{1}\{s_{i,h}^{(k)}=s_i\}}\!,\\
\qquad\quad\textup{if} ~\sum_{k=1}^{T_J}\mathbf{1}\{s_{i,h}^{(k)}=s_i\} \ge 1; \\
0, \qquad\textup{if} ~\sum_{k=1}^{T_J}\mathbf{1}\{s_{i,h}^{(k)}=s_i\} =0.
\end{cases}
\end{equation}

Then we can replace $\Ph$ and $\orh$ in Equation \eqref{eq:bellman-eq} with $\hPh$ and $\hrh$ and get the equation for the estimated averaged-Q functions which can be solved by backward dynamic programming. 
\begin{equation}\label{eq:Q-hat}
\begin{split}
    &\hQ_{i,h}^\pi(s_i,a_i) =\hrh(s_i,a_i) +\\
    & ~~\sum_{s_i',a_i'}\hPh(s_i'|s_i,a_i)\pi_{i,h+1}(a_i'|s_i')\hQ_{i,h+1}^\pi(s_i',a_i'), \\
    &\hQ_{i,H+1}^\pi(s_i,a_i)=0.
\end{split}
\end{equation}

Lastly, the policy update is given by Equation \eqref{eq:inexact-soft-PI}. A full description of the multi-agent soft policy iteration (MA-SPI) algorithm is in Algorithm \ref{alg:soft-PI}

\begin{algorithm}[htbp]
\caption{Multi-Agent Soft Policy Iteration (MA-SPI)} 
\label{alg:soft-PI}
\begin{algorithmic}[1]
\STATE \textbf{Pre-Data Collection:} Get dataset $D_K:= \{\{s_{i,h}^{(k)},a_{i,h}^{(k)}\}_{h=1}^H\}_{k=1}^{T_K}$ by sampling $s_{i,1}^{(k)}\sim\rho_i, a_{i,h}^{(k)}\sim \textup{Unif}(\cA_i), 1\le h \le H.$ 
\STATE \textbf{Estimation of $\Ph(s_i'|s_i,a_i)$'s:} Estimate the state transition probabilities $\Ph(s_i'|s_i,a_i)$'s using Equation \eqref{eq:P-hat} and dataset $D_K$.
\FOR{$t = 1,2,\dots, T_G+1$}
\FOR{Agent $i = 1,2,\dots, n$}
\STATE \textbf{On-policy Data Collection:} Get dataset $D_J^{(t)}:= \{\{s_{i,h}^{(k)},a_{i,h}^{(k)}\}_{h=1}^H\}_{k=1}^{T_J}$ by sampling $s_{i,1}^{(k)}\sim\rho_i, a_{i,h}^{(k)}\sim \pi_{i,h}^{(t)}(\cdot|s_{i,h}^{(k)}), 1\le h\le H$.
\STATE \textbf{Estimation of $\orh(s_i,a_i)$'s:} Calculate $\hrh(s_i,a_i)$'s by Equation \eqref{eq:r-hat}.
\STATE \textbf{Estimation of $\oQh(s_i,a_i)$'s:}  Caculate $\hQh(s_i,a_i)$'s by Equation \eqref{eq:Q-hat}
\STATE \textbf{Policy Update:} Update policy $\pi_i^{(t+1)}$ by Equation \eqref{eq:inexact-soft-PI}.
\ENDFOR
\ENDFOR
\end{algorithmic}
\end{algorithm}

\subsection{Sample Complexity}
\begin{theorem}\label{thm:sample-complexity}
    Under Assumption \ref{assump:sufficient-exploration}, for any $\epsilon\le 1$ and any $\delta < 1$, by running Algorithm \ref{alg:soft-PI} with
    \begin{small}
    \begin{align*}
        T_G &\!\ge\!\frac{6\placeholder\left((\Phi_{\!\max}\!-\!\Phi_{\!\min}\!+\!1)/2 + \ln(8nH^{\frac{3}{2}}c^{\!-1}\epsilon^{\!-1})\right)^2}{c^2\epsilon^2} \\
        T_J &\!\ge\!\frac{2048n^2H^4\ln(8\delta^{-1}nHT_G\sum_i|S_i||A_i|)}{c^4\epsilon^2}\\
        T_K &\!\ge\! \frac{2048n^{\!2}\!\max_i\!|S_i\!|^{2}\!\max_i\!|A_i\!|^{2}\!H^{\!6}\ln(8\delta^{\!-1\!}\!nHT_G\sum_i\!\!|S_i\!|^{2}|A_i\!|)}{c^4\epsilon^2}\!,
    \end{align*}
    \end{small}
    it can be guaranteed that the output policies $\pi^{(t)}$'s from Algorithm \ref{alg:soft-PI} satisfy:
    \begin{equation}\label{eq:thm-eq-1}
    \sum_{t=1}^{T_G} \frac{1}{\sqrt{t}}\NEgap(\pi^{(t)}) \le \frac{\epsilon\sqrt{T_G}}{2} + \frac{\epsilon\sum_{t=1}^{T_G}\frac{1}{\sqrt{t}}}{2},
    \end{equation}
    which implies that:
    \begin{equation}\label{eq:thm-eq-2}
        \min_{1\le t\le T_G}\NEgap(\pi^{(t)}) \le\epsilon.
    \end{equation}
\end{theorem}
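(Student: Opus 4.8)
The plan is to read the dynamics as an inexact conditional-gradient (Frank--Wolfe) ascent on the total potential $\Phi$, using that by Lemma~\ref{lemma:averaged-Q-bellman-eq} each agent $i$ effectively faces a single-agent MDP with reward $\orh$, transition $\Ph$, value $J_i(\pi_i,\pi_{-i})$, and $Q$-function $\oQh$. Because $\cM$ is an MPG, identity \eqref{eq:MPG-def} gives $\nabla_{\pi_i}\Phi = \nabla_{\pi_i}J_i$, and the policy-gradient formula for this effective MDP yields $\partial\Phi/\partial\pi_{i,h}(a_i|s_i)=d_{i,h}^{\pi_i}(s_i)\,\oQh(s_i,a_i)$. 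Thus $\Phi$ serves as the Lyapunov function, and the update \eqref{eq:inexact-soft-PI}, which sets $\pi_i^{(t+1)}=(1-\eta_t)\pi_i^{(t)}+\eta_t\hat g_i^{(t)}$ with $\hat g_i^{(t)}$ greedy for $\hQ^{(t)}$, is exactly a simultaneous conditional-gradient step. I would take $\eta_t\propto 1/\sqrt t$ to produce the $1/\sqrt t$-weighted sum in \eqref{eq:thm-eq-1}.

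First I would lower-bound the per-iteration potential increase. A descent-type smoothness inequality for $\Phi$ over the joint simplex gives $\Phi(\pi^{(t+1)})-\Phi(\pi^{(t)})\ge\eta_t\sum_i\langle\nabla_{\pi_i}\Phi,\hat g_i^{(t)}-\pi_i^{(t)}\rangle-\tfrac{L}{2}\eta_t^2\|\hat g^{(t)}-\pi^{(t)}\|^2$, where $L$, the Lipschitz constant of $\nabla\Phi$, I expect to scale polynomially in $n$ and $H$. Letting $g_i^\star$ denote the greedy policy for the true $\oQ^{(t)}$, the greedy optimality of $\hat g_i^{(t)}$ together with $\|\hQ^{(t)}-\oQ^{(t)}\|_\infty\le\epsilon_Q$ makes each inner product exceed the true greedy improvement $\langle\nabla_{\pi_i}\Phi,g_i^\star-\pi_i^{(t)}\rangle=\sum_{h,s_i}d_{i,h}^{\pi_i}(s_i)(\max_{a_i}\oQh-\sum_{a_i}\pi_{i,h}\oQh)$ up to an additive $O(H\epsilon_Q)$ error.

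The crucial conversion is from this greedy improvement, measured under the \emph{current} state distribution, to $\NEgap_i(\pi^{(t)})$, which the performance-difference lemma expresses under the \emph{best-response} distribution. Here Assumption~\ref{assump:sufficient-exploration} enters: since $\max_{a_i}\oQh-\sum_{a_i}\pi_{i,h}\oQh\ge0$, while $d_{i,h}^{\pi_i}\ge c$ and the best-response distribution is $\le1$, the mismatch coefficient is at most $1/c$, giving $\sum_i\langle\nabla_{\pi_i}\Phi,g_i^\star-\pi_i\rangle\ge c\,\NEgap(\pi^{(t)})$. Combining yields $\Phi(\pi^{(t+1)})-\Phi(\pi^{(t)})\ge c\eta_t\NEgap(\pi^{(t)})-2nH\eta_t\epsilon_Q-\tfrac{L}{2}\eta_t^2D^2$ for a bounded $D^2=\max\|\hat g-\pi\|^2$. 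Summing over $t$, telescoping the potential (bounded by $\Phi_{\max}-\Phi_{\min}$), and using $\sum_t\eta_t^2=O(\ln T_G)$ will produce \eqref{eq:thm-eq-1}, once $\epsilon_Q\le c\epsilon/(4nH)$ and $T_G$ is large enough that the telescoped and second-order terms are absorbed into $\tfrac12\epsilon\sqrt{T_G}$. The implication \eqref{eq:thm-eq-2} then follows since the minimum is no larger than the $1/\sqrt t$-weighted average.

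The remaining task is statistical: bound $\epsilon_Q$. Via Hoeffding/empirical-Bernstein concentration, under Assumption~\ref{assump:sufficient-exploration} (each $(s_i,a_i)$ visited $\gtrsim cT_K$ times, each $s_i$ visited $\gtrsim cT_J$ times), I would show $\|\hPh-\Ph\|_1\lesssim\sqrt{|S_i||A_i|\ln(\cdot)/(cT_K)}$ and $\|\hrh-\orh\|_\infty\lesssim\sqrt{\ln(\cdot)/(cT_J)}$ uniformly with probability $1-\delta$. Propagating these through the Bellman recursion \eqref{eq:Q-hat} and using $\oQh\le H$ gives $\|\hQ-\oQ\|_\infty\lesssim H\|\hrh-\orh\|_\infty+H^2\max\|\hPh-\Ph\|_1=:\epsilon_Q$; forcing $\epsilon_Q\le c\epsilon/(4nH)$ yields the stated $T_J,T_K$. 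I expect the \textbf{main obstacle} to be the deterministic descent step: quantifying the smoothness constant $L$ of $\Phi$ under simultaneous $n$-agent updates (controlling the cross-terms from all agents moving at once) and cleanly turning the current-distribution greedy gain into the best-response Nash gap through the $1/c$ mismatch; the concentration bounds, though tedious, are comparatively routine.
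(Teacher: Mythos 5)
Your overall architecture coincides with the paper's: a per-iteration sufficient-ascent inequality of the form
\begin{equation*}
\Phi(\pi^{(t+1)}) - \Phi(\pi^{(t)}) \;\ge\; c\,\eta_t\,\NEgap(\pi^{(t)}) \;-\; 2\eta_t n H \epsilon_Q \;-\; O\!\left(\eta_t^2\,\mathrm{poly}(n,H)\right),
\end{equation*}
followed by telescoping with $\eta_t \propto 1/\sqrt{t}$, propagation of the $(\orh,\Ph)$ estimation errors through the Bellman recursion, and Hoeffding bounds — these are exactly the paper's Lemmas \ref{lemma:inexact-soft-PI}, \ref{lemma:Q-hat-bound} and \ref{lemma:bound-r-P-hat}. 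Your conversion from the current-distribution greedy improvement to the Nash gap (nonnegativity of $\max_{a_i}\oA^{\pi}_{i,h}(s_i,\cdot)$, the lower bound $d_{i,h}\ge c$, and the fact that the best-response visitation weights sum to one) is precisely Lemma \ref{lemma:NE-gap-bound} combined with the first part of Lemma \ref{lemma:sufficient-descent}, and your statistical step matches the paper up to the cosmetic choice of row-wise $\ell_1$ versus entrywise bounds on $\hPh-\Ph$ (which just trades a $|S_i|$ factor between Lemma \ref{lemma:Q-hat-bound} and Lemma \ref{lemma:bound-r-P-hat}; the paper also handles the random denominators in the ratio estimators by separately bounding the zero-visit event, a point you gloss over but which is routine).

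The genuine gap is the step you yourself flag as the main obstacle: the second-order term. You posit a Lipschitz-gradient ("smoothness") inequality for $\Phi$ over the product of simplices with an unspecified constant $L=\mathrm{poly}(n,H)$, but you neither derive this property nor control the cross-terms created by all $n$ agents moving simultaneously; without a concrete $L$ you cannot set $\eta_t$, and the quantitative statement (the constant $\placeholder$ entering both $\eta_t$ and the lower bound on $T_G$) is out of reach. The paper never differentiates $\Phi$ at all; it sidesteps your obstacle with a hybrid telescoping over agents: defining $\widetilde\pi^i = (\pi^{(t)}_1,\dots,\pi^{(t)}_{i-1},\pi^{(t+1)}_i,\dots,\pi^{(t+1)}_n)$, it writes $\Phi(\pi^{(t+1)})-\Phi(\pi^{(t)}) = \sum_i \bigl(\Phi(\widetilde\pi^i)-\Phi(\widetilde\pi^{i+1})\bigr) = \sum_i \bigl(J_i(\widetilde\pi^i)-J_i(\widetilde\pi^{i+1})\bigr)$ by the MPG identity, applies the \emph{exact} performance-difference lemma (Lemma \ref{lemma:performance-difference}) to each term — so the only second-order error is that the lemma produces $\oQ^{\widetilde\pi^{i+1}}_{i,h}$ rather than $\oQ^{(t)}_{i,h}$ — and then bounds $|\oQ^{\widetilde\pi^{i+1}}_{i,h}-\oQ^{(t)}_{i,h}|$ via the $\ell_1$ policy-perturbation lemmas for $\oQ$, $d$ and $Q$ (Lemmas \ref{lemma:smoothness-averaged-Q}, \ref{lemma:smoothness-d}, \ref{lemma:smoothness-Q}) together with $\|\pi^{(t+1)}_{i,h}-\pi^{(t)}_{i,h}\|_1\le 2\eta_t$, yielding exactly the $\placeholder\,\eta_t^2$ term. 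If you insist on the Frank--Wolfe formulation, the Lipschitzness of your gradient $d^{\pi_i}_{i,h}(s_i)\oQh(s_i,a_i)$ would require proving these same perturbation lemmas; in other words, your missing $L$ is not a citable fact but is precisely the technical content of the paper's Lemma \ref{lemma:sufficient-descent}, and it is cleaner to establish it in the paper's integrated (performance-difference) form than in differential form.
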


The sample complexity of Algorithm \ref{alg:soft-PI} is given by $T_G T_J + T_K$. From Theorem \ref{thm:sample-complexity}, by setting
\begin{align*}
    T_G &\sim\widetilde{O}\left(\frac{n^2H^3}{c^2\epsilon^2}\right), T_J\sim\widetilde{O}\!\left(\frac{n^2H^4}{c^4\epsilon^2}\right),\\T_K&\sim \widetilde{O}\left(\frac{n^2\max_i\!|S_i|^2\max_i\!|A_i|^2H^4}{c^2\epsilon^2}\right),
\end{align*}
where $\widetilde O$ hides the $\log$ factors, running Algorithm \ref{alg:soft-PI} can find an approximate NE with NE gaps smaller than $\epsilon$. Thus the complexity for finding an $\epsilon$-NE is $\widetilde O\left(\frac{n^4H^7}{c^6\epsilon^4} + \frac{n^2\max_i\!|S_i|^2\max_i\!|A_i|^2H^4}{c^2\epsilon^2}\right)$.

\section{Proof Sketches for Theorem \ref{thm:sample-complexity}}
This section provides a brief proof sketch of our sample complexity result. The proof of the theorem can be decomposed into the following three major steps.

\noindent \textbf{Step 1: iteration complexity of MA-SPI.}
The proof of Theorem \ref{thm:sample-complexity} is based on the following main lemma on the iteration complexity of MA-SPI.
\begin{lemma}\label{lemma:inexact-soft-PI}
Suppose
\begin{equation*}
\begin{split}
    &\max_{s_i,a_i}\left|(\hQ^{\pi^{(t)}}_{i,h}-\oQ^{\pi^{(t)}}_{i,h})(s_i,a_i)\right|\le \epsilon_Q,\\
    &~~ \forall~1\le i\le n, 1\le h\le H, 1\le t\le T_G
\end{split}
\end{equation*}
then running Equation \eqref{eq:inexact-soft-PI} with $\eta_t = \frac{1}{\sqrt{\placeholder t}}$ will guarantee that
\begin{equation*}
\begin{split}
     &\sum_{t=1}^{T_G} \frac{1}{\sqrt{t}}\NEgap(\pi^{(t)}) \le\\ &\frac{\sqrt{\placeholder}}{c}\left(\Phi_{\max} \!-\! \Phi_{\min} \!+\!1\!+\! \ln(T_G)\right) + \frac{2nH\epsilon_Q}{c}\sum_{t=1}^{T_G}\frac{1}{\sqrt{t}},
\end{split}
\end{equation*}
which implies that
\begin{equation*}
\begin{split}
    &\min_{1\le t\le T_G}\NEgap(\pi^{(t)}) \le\\
    &\underbrace{\frac{\sqrt{\placeholder} \left(\Phi_{\max} - \Phi_{\min} + 1+ \ln(T_G)\right) }{2c\sqrt{T_G} }}_{\textup{Part I}}+ \underbrace{\frac{2nH\epsilon_Q}{c}}_{\textup{Part II}}.
\end{split}
\end{equation*}
\end{lemma}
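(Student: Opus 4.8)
The plan is to exploit the Markov potential game structure so that the simultaneous, multi-agent update in Equation \eqref{eq:inexact-soft-PI} can be analyzed as an inexact, projection-free ascent scheme on the single scalar potential $\Phi$ of Equation \eqref{eq:MPG-Phi}. Writing $g_{i,h}^{(t)}(\cdot\,|s_i) := \mathbf{1}\{\cdot = \argmax_{a_i'}\hQ_{i,h}^{(t)}(s_i,a_i')\}$ for the greedy direction, the update reads $\pi_{i,h}^{(t+1)} = \pi_{i,h}^{(t)} + \eta_t(g_{i,h}^{(t)} - \pi_{i,h}^{(t)})$, i.e.\ a Frank--Wolfe / conditional-gradient step whose linear-maximization oracle is replaced by greedy maximization of the estimated averaged-Q function. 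I would first establish a one-step ascent lemma of the form $\Phi(\pi^{(t+1)}) - \Phi(\pi^{(t)}) \ge \eta_t\,\mathrm{FO}^{(t)} - \placeholder\,\eta_t^2$, where $\mathrm{FO}^{(t)}$ is a first-order advantage term, and then telescope it over $t$ with $\eta_t = 1/\sqrt{\placeholder\,t}$.

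The first ingredient is a performance-difference identity in terms of the averaged-Q function. Because the dynamics are decoupled and $\pi_{-i}$ is held fixed, marginalizing out $s_{-i},a_{-i}$ reduces agent $i$'s deviation to a single-agent MDP with averaged reward $\orh$, so that $J_i(\pi_i',\pi_{-i}^{(t)}) - J_i(\pi_i^{(t)},\pi_{-i}^{(t)}) = \sum_h \bE_{s_i\sim d^{\pi_i'}_{i,h}}\langle \pi_{i,h}' - \pi_{i,h}^{(t)}, \oQ_{i,h}^{(t)}(s_i,\cdot)\rangle$. Taking $\pi_i'$ to be the best response identifies $\NEgap_i(\pi^{(t)})$ with a sum of nonnegative advantages $A_{i,h}(s_i) := \max_{a_i}\oQ_{i,h}^{(t)}(s_i,a_i) - \langle\pi_{i,h}^{(t)},\oQ_{i,h}^{(t)}(s_i,\cdot)\rangle$ weighted by the best-response visitation $d^\star$. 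The greedy direction $g_{i,h}^{(t)}$ produces essentially the same advantage but weighted by the current visitation $d_{i,h}^{(t)}$, up to a $2\epsilon_Q$ slack per stage incurred by using $\hQ$ in place of $\oQ$ (since $\langle g_{i,h}^{(t)},\oQ_{i,h}^{(t)}\rangle \ge \max_{a_i}\oQ_{i,h}^{(t)} - 2\epsilon_Q$). Because $A_{i,h}\ge 0$, $d^\star\le 1$, and Assumption \ref{assump:sufficient-exploration} gives $d_{i,h}^{(t)}\ge c$, swapping $d^\star$ for $d^{(t)}$ costs only a factor $c$, yielding $\mathrm{FO}^{(t)} := \sum_i\sum_h\bE_{s_i\sim d_{i,h}^{(t)}}\langle g_{i,h}^{(t)} - \pi_{i,h}^{(t)},\oQ_{i,h}^{(t)}\rangle \ge c\,\NEgap(\pi^{(t)}) - 2nH\epsilon_Q$. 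This is precisely where the $1/c$ and the $2nH\epsilon_Q/c$ terms of the lemma originate.

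The technical core, and the step I expect to be the main obstacle, is the ascent lemma itself. I would decompose $\Phi(\pi^{(t+1)}) - \Phi(\pi^{(t)})$ by letting agents switch from $\pi^{(t)}$ to $\pi^{(t+1)}$ one at a time; by the defining property \eqref{eq:MPG-def} each increment equals the matching change in $J_i$, whose leading term is exactly the $i$-th summand of $\eta_t\,\mathrm{FO}^{(t)}$ evaluated at the frozen policy $\pi^{(t)}$. The remainder collects (i) cross-agent terms, since when agent $i$ moves the already-updated agents $j<i$ have perturbed its averaged reward and $\oQ_{i,h}$ by $O(\eta_t)$, and (ii) cross-stage terms, since perturbing a stage-$h$ policy shifts the downstream visitation by $O(\eta_t)$. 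Each such effect multiplies the outer $O(\eta_t)$ move, so every correction is $O(\eta_t^2)$; bounding their sum amounts to a smoothness estimate for $\Phi$ viewed as a multilinear function of the joint per-stage policies. Here the $n^2$ factor comes from summing over ordered agent pairs in the simultaneous update, while the $H^3$ factor comes from the $O(H)$ magnitude of the averaged-Q functions together with the $O(H^2)$ compounding of a policy perturbation across the horizon. Tracking these constants is what yields the quadratic coefficient $\placeholder = 4n^2H^3$; this bookkeeping, rather than any single inequality, is the crux.

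Finally I would combine the pieces into $c\,\eta_t\NEgap(\pi^{(t)}) \le \Phi(\pi^{(t+1)}) - \Phi(\pi^{(t)}) + 2nH\epsilon_Q\,\eta_t + \placeholder\,\eta_t^2$ and substitute $\eta_t = 1/\sqrt{\placeholder\,t}$, which turns the quadratic term into $1/t$ and the linear term into $(c/\sqrt{\placeholder})\,\NEgap(\pi^{(t)})/\sqrt{t}$. Summing over $t=1,\dots,T_G$, the potential differences telescope to at most $\Phi_{\max}-\Phi_{\min}$, the bound $\sum_t 1/t \le 1+\ln T_G$ supplies the logarithmic term, and multiplying through by $\sqrt{\placeholder}/c$ reproduces the first displayed inequality of the lemma. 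The second (min) bound then follows by lower-bounding $\sum_{t=1}^{T_G}1/\sqrt{t} = \Theta(\sqrt{T_G})$ and using $\NEgap\ge 0$, so that $\min_{t}\NEgap(\pi^{(t)})$ is controlled by the weighted average on the left-hand side.
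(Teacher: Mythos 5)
Your proposal is correct and follows essentially the same route as the paper's proof: an agent-by-agent (one-at-a-time) decomposition of $\Phi(\pi^{(t+1)})-\Phi(\pi^{(t)})$ via the MPG property and the averaged-Q performance-difference lemma, a first-order term lower-bounded by $c\,\NEgap(\pi^{(t)})-2nH\epsilon_Q$ using the greedy-direction slack, nonnegativity of the max-advantages, and Assumption \ref{assump:sufficient-exploration}, a $\placeholder\,\eta_t^2$ smoothness remainder, and the same telescoping with $\eta_t = 1/\sqrt{\placeholder\, t}$. The only cosmetic difference is that you evaluate the leading term at the current visitation $d^{(t)}_{i,h}$ (pushing the visitation shift into the $O(\eta_t^2)$ remainder), whereas the paper keeps the updated policy's visitation $d^{\widetilde\pi^i}_{i,h}$ in the leading term, which Assumption \ref{assump:sufficient-exploration} handles equally well.
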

Lemma \ref{lemma:inexact-soft-PI} suggests that the minimum NE-gap depends on two terms. The first term (Part I) diminishes to zero at rate $\widetilde O\left(\!\frac{1}{\sqrt{T_G}}\!\right)$, whereas the second term (Part II) is a bias term that is caused by the estimation error $\epsilon_Q$ of the averaged Q-functions. Thus, in order to reach an $\epsilon$-NE, in the proof we set both Part I and II to be smaller than $\frac{\epsilon}{2}$, and this gives a lower bound for $T_G$ and an upper bound for $\epsilon_Q$. Lemma \ref{lemma:inexact-soft-PI} is proved by a sufficient ascent lemma whose detail is in the Appendix.

\noindent \textbf{Step 2: sensitivity of $\epsilon_Q$ to the estimation errors.}
The rest of the proof focuses on the analysis of $\epsilon_Q$. In this step, we first bound $\epsilon_Q$ by the estimation errors of $\orh(s_i,a_i)$ and $P_{i,h}(s_i'|s_i,a_i)$, which is stated as follows:
\begin{lemma}\label{lemma:Q-hat-bound}
    Suppose
    \begin{align*}
        \left|\hrh(s_i,a_i)-\orh(s_i,a_i)\right|&\le\epsilon_r, \forall s_i\in\cS_i, a_i\in\cA_i,\! h,\\
        \left|\hPh(s_i'|s_i,a_i)\!-\!\Ph(s_i'|s_i,a_i)\right|&\!\le\!\epsilon_P, \!\forall s_i',s_i\!\in\!\cS_i, a_i\!\in\!\cA_i,\! h, 
    \end{align*}
    then $\hQ_{i,h}^\pi$ calculated from Equation \eqref{eq:Q-hat} satisfies:
    \begin{equation*}
        \max_{s_i,a_i}\left|(\hQh\!\!-\!\oQh)(s_i,a_i)\right| \!\le\! \epsilon_r(\!H\!+\!1\!-\!h\!) + \epsilon_P H(\!H\!+\!1\!-\!h\!)|S_i|.
    \end{equation*}
\end{lemma}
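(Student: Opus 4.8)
The plan is to prove the bound by backward induction on the horizon $h$, running from the terminal step $h=H+1$ down to $h=1$ and exploiting that $\oQh$ and $\hQh$ satisfy matching Bellman recursions: Equation~\eqref{eq:bellman-eq} for the true averaged-Q function and Equation~\eqref{eq:Q-hat} for its estimate, both sharing the terminal value $\oQ_{i,H+1}^\pi = \hQ_{i,H+1}^\pi = 0$. I would introduce the shorthand $\Delta_h := \max_{s_i,a_i}\left|(\hQh-\oQh)(s_i,a_i)\right|$, so that the base case is $\Delta_{H+1}=0$ (which matches the claimed bound, since it vanishes at $h=H+1$), and the goal reduces to establishing a clean one-step recursion for $\Delta_h$ in terms of $\Delta_{h+1}$.

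The core step is to subtract the two Bellman equations and perform a simulation-lemma-style add-and-subtract decomposition of the ``next-step'' term, writing
\[
\hPh\,\hQ_{i,h+1}^\pi - \Ph\,\oQ_{i,h+1}^\pi = \hPh\bigl(\hQ_{i,h+1}^\pi - \oQ_{i,h+1}^\pi\bigr) + (\hPh - \Ph)\,\oQ_{i,h+1}^\pi,
\]
summed against $\pi_{i,h+1}(a_i'|s_i')$ over $(s_i',a_i')$. I would then bound three contributions separately. The reward-estimation term contributes at most $\epsilon_r$. The recursion term contributes at most $\Delta_{h+1}$, because the weights $\hPh(s_i'|s_i,a_i)\pi_{i,h+1}(a_i'|s_i')$ form a probability distribution over $(s_i',a_i')$ (note $\hPh$ from Equation~\eqref{eq:P-hat} is a valid transition kernel in both branches of its definition, and $\pi_{i,h+1}(\cdot|s_i')$ is a simplex). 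The transition-estimation term contributes at most $\epsilon_P H |S_i|$, using $|\hPh - \Ph| \le \epsilon_P$ entrywise, $\sum_{a_i'}\pi_{i,h+1}(a_i'|s_i')=1$, the crude uniform bound $0 \le \oQ_{i,h+1}^\pi \le H$ (it is a sum of at most $H$ stage rewards in $[0,1]$), and summing the per-entry $\epsilon_P$ error over the $|S_i|$ possible next states.

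Combining the three bounds gives the recursion $\Delta_h \le \Delta_{h+1} + \epsilon_r + \epsilon_P H |S_i|$, and unrolling from $h$ down to the terminal step accumulates exactly $H+1-h$ copies of each per-step error, yielding $\Delta_h \le (H+1-h)\epsilon_r + (H+1-h)\epsilon_P H |S_i|$, which is precisely the claimed inequality. I expect the argument to be essentially routine; the only point requiring care, and the main obstacle, is keeping the stochastic-matrix structure of $\hPh$ and $\pi_{i,h+1}$ visible so that the recursion term stays bounded by $\Delta_{h+1}$ without being inflated by a factor tracking the magnitude of the Q-functions, while the transition error is cleanly isolated into the separate $(\hPh-\Ph)\oQ_{i,h+1}^\pi$ term, where the genuine $H$ and $|S_i|$ factors arise.
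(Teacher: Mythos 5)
Your proof is correct and takes essentially the same route as the paper's: backward induction on the matching Bellman recursions \eqref{eq:bellman-eq} and \eqref{eq:Q-hat}, an add-and-subtract decomposition of the next-step term, and the per-step recursion $\Delta_h \le \Delta_{h+1} + \epsilon_r + \epsilon_P H |S_i|$. The only (immaterial) difference is the mirror arrangement of the cross terms: the paper decomposes as $\Ph\bigl(\hQ_{i,h+1}^\pi - \oQ_{i,h+1}^\pi\bigr) + (\hPh - \Ph)\hQ_{i,h+1}^\pi$, implicitly bounding $\hQ_{i,h+1}^\pi \le H$, while you use $\hPh\bigl(\hQ_{i,h+1}^\pi - \oQ_{i,h+1}^\pi\bigr) + (\hPh - \Ph)\oQ_{i,h+1}^\pi$ with the bound $\oQ_{i,h+1}^\pi \le H$; both are valid since $\Ph$ and $\hPh$ are each genuine probability kernels and both Q-functions lie in $[0,H]$.
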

Lemma \ref{lemma:Q-hat-bound} is proved by induction on the Bellman equation (Equation \ref{eq:bellman-eq}).

\noindent \textbf{Step 3: bound the estimation errors of $\orh(s_i,a_i)$ and $P_{i,h}(s_i'|s_i,a_i)$}
The last step is bounding the estimation errors of the averaged-reward $\orh(s_i,a_i)$ and transition probability $P_{i,h}(s_i'|s_i,a_i)$.
\begin{lemma}\label{lemma:bound-r-P-hat}
Under Assumption \ref{assump:sufficient-exploration}, fix $s_i's_i,a_i,h$, for $\epsilon\le 1$,
\begin{equation*}
    \textstyle \Pr\left(\left|\hrh(s_i,a_i)-\orh(s_i,a_i)\right|\ge\epsilon \right) \le 4\exp\left(-\frac{\epsilon^2 c^2 T_J}{32}\right).
\end{equation*}
\begin{equation*}
   \textstyle \Pr\!\left(\!\left|\hPh(s_i'|s_i,a_i)\!\!-\!\!\Ph(s_i'|s_i,a_i)\right|\!\ge\!\epsilon \!\right) \!\le\! 4\!\exp\left(\!-\frac{\epsilon^2 c^2 T_K}{32|A_i|^2}\!\right)\!\!.
\end{equation*}
\end{lemma}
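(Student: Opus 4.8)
The plan is to view both estimators as self-normalized empirical averages whose denominator is the \emph{random} number of collected trajectories that visit the relevant local state (and, for the transition, also play the relevant local action). For each estimator I would argue in two stages: a Hoeffding bound on the numerator, and a lower-tail bound guaranteeing that the denominator is not too small. Assumption~\ref{assump:sufficient-exploration} enters only in the second stage.

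\textbf{Reward estimate.} On the event $N:=\sum_{k=1}^{T_J}\mathbf 1\{s_{i,h}^{(k)}=s_i\}\ge 1$, I would write
\[
\hrh(s_i,a_i)-\orh(s_i,a_i)=\frac1N\sum_{k=1}^{T_J}X_k,\qquad X_k:=\big(r_{i,h}(s_i,s_{-i,h}^{(k)},a_i,a_{-i,h}^{(k)})-\orh(s_i,a_i)\big)\mathbf 1\{s_{i,h}^{(k)}=s_i\}.
\]
The crucial structural point is that decoupled dynamics and local policies make $s_{i,h}^{(k)}$ independent of $(s_{-i,h}^{(k)},a_{-i,h}^{(k)})$; since the on-policy marginal of $(s_{-i,h}^{(k)},a_{-i,h}^{(k)})$ is exactly the one defining $\orh$ in Lemma~\ref{lemma:averaged-Q-bellman-eq}, each $X_k$ is mean-zero, i.i.d., and takes values in an interval of width one. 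I would then bound
\[
\Pr\!\big(|\hrh-\orh|\ge\epsilon\big)\le \Pr\!\big(N< \tfrac{c}{2}T_J\big)+\Pr\!\Big(\big|\textstyle\sum_k X_k\big|\ge \tfrac{c\epsilon}{2}T_J\Big),
\]
using that on $\{N\ge \tfrac{c}{2}T_J\}$ a deviation $|\hrh-\orh|\ge\epsilon$ forces $|\sum_k X_k|\ge \epsilon N\ge \tfrac{c\epsilon}{2}T_J$. A two-sided Hoeffding bound controls the second term by $2\exp(-\Theta(c^2\epsilon^2 T_J))$; this is where the factor $c^2$ (rather than $c$) in the statement originates, since the threshold $\tfrac{c}{2}T_J$ is squared by Hoeffding.

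\textbf{Counting bound and transition estimate.} For the first term, $N$ is a sum of $T_J$ i.i.d.\ Bernoulli indicators with success probability $d_{i,h}^{\pi_i}(s_i)>c$ by Assumption~\ref{assump:sufficient-exploration}, so $\bE[N]>cT_J$ and a Hoeffding lower-tail bound gives $\Pr(N<\tfrac{c}{2}T_J)\le\exp(-\Theta(c^2 T_J))$. Merging the two exponents using $\epsilon\le1,\ c\le1$ yields the stated $4\exp(-\epsilon^2c^2T_J/32)$ (the constant $32$ and the prefactor $4$ are loose, absorbing the two-sided tails and the crude threshold). The transition bound is identical after replacing the selection event by $\{s_{i,h}^{(k)}=s_i,\,a_{i,h}^{(k)}=a_i\}$ and the summand by $(\mathbf 1\{s_{i,h+1}^{(k)}=s_i'\}-\Ph(s_i'|s_i,a_i))$ times that indicator; the Markov property makes these mean-zero i.i.d.\ with conditional mean $\Ph(s_i'|s_i,a_i)$, and since the pre-data uses uniformly random actions the per-sample hit probability is $d_{i,h}^{\pi_i}(s_i)/|\cA_i|>c/|\cA_i|$. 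Using the threshold $\tfrac{c}{2|\cA_i|}T_K$ then produces the extra $|\cA_i|^2$ in the exponent, giving $4\exp(-\epsilon^2c^2T_K/(32|\cA_i|^2))$.

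\textbf{Main obstacle.} The genuinely delicate step is the random denominator $N$ (resp.\ the state--action hit count for $\hPh$): one cannot apply Hoeffding directly to $\hrh$ because the number of usable samples is itself random. The argument must therefore simultaneously (i) establish that, conditioned on the hit pattern, the numerator summands are mean-zero i.i.d.\ with the correct mean—which hinges on the decoupling of the dynamics and on the independence of the uniform exploration action—and (ii) guarantee enough hits via Assumption~\ref{assump:sufficient-exploration}. Once these are in place, the concentration itself is a routine Hoeffding/Chernoff calculation, and the final bounds follow by elementary manipulation of the exponents.
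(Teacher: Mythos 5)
Your proposal is correct and takes essentially the same approach as the paper: both proofs use the decoupled-dynamics independence to pin down the mean of the numerator summands, apply Hoeffding once to the numerator sum and once to the visit count, and invoke Assumption \ref{assump:sufficient-exploration} (together with uniform exploration for the transition estimate) to get hit probabilities $c$ and $c/|\cA_i|$, which is exactly where the $c^2$ and $|A_i|^2$ in the exponents arise. The only difference is bookkeeping: you threshold the visit count at $\tfrac{c}{2}T_J$ and work with a mean-zero sum, whereas the paper folds $\epsilon$ into the summand (so each term has mean at most $-\epsilon c$), turns the bad event into $\bigl\{\sum_k X_k \ge 0\bigr\}$, and then only the zero-visit event needs separate treatment; both routes yield the stated constants.
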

Finally, by combining the above three steps and setting $\epsilon_Q, \epsilon_r, \epsilon_P$ to appropriate values, we can calculate lower bounds for $T_G, T_J, T_K$ in order to get an $\epsilon$-NE.

\section{APPLICATION TO THE DYNAMIC COVERING GAME}\label{sec:covering}
\begin{wrapfigure}{r}{.2\textwidth}
    \centering
    \includegraphics[width=.2\textwidth]{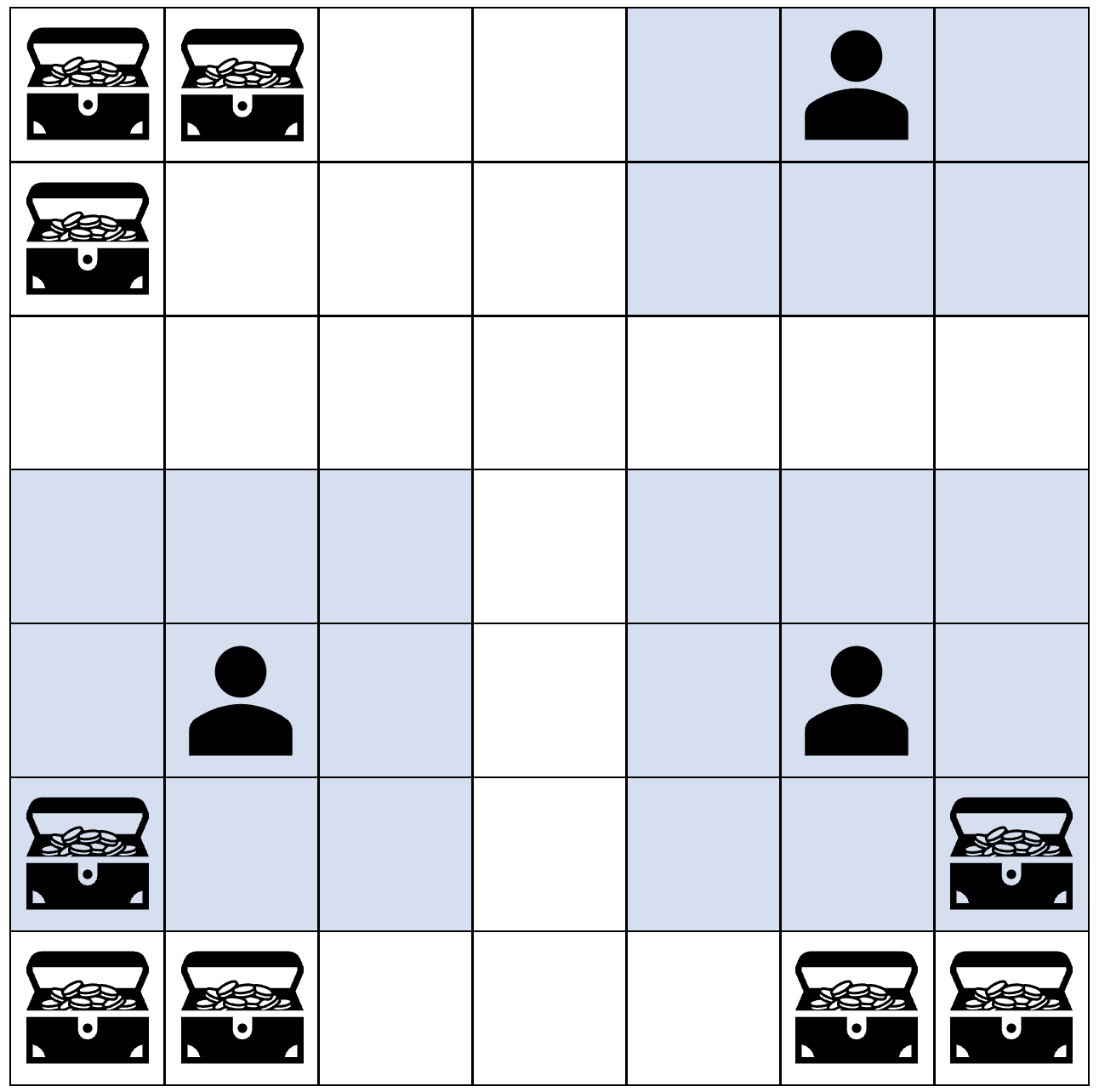}
    \caption{Multi-agent Dynamic Covering Game.}
    \label{fig:multi-agent dynamic covering game}
\end{wrapfigure}

 In this section, we use a simple dynamic covering game to illustrate our results. In a dynamic covering game, at each time, the agents' local states form a static covering game and the transition of the local states is governed by some local transition. While the PoA bounds for covering games in the static setting are well-established (e.g. \cite{roughgarden2015intrinsic,gairing2009covering}), not much is known for the dynamic setting. In this section, we look into three most commonly used reward functions (namely the identical-interest, marginal-contribution and utility-sharing), analyze their PoA bounds, and simulate the proposed soft policy iteration algorithm. 
 
 Specifically, we consider a group of $n$ collaborative agents in a covering game on a 2D grid of size $N$ ($[N]\times[N]$). For a grid $(j,k)$, there is a reward of value $w_{jk}$. The local state of each agent is its grid position, $s_{i,h}\in [N]\times [N]$. The local action $a_{i,h}$ represents the actions that an agent can take (such as moving up, down, left, right) on the grid. Each agent can have its own local station transition $P(s_{i,h+1}|s_{i,h},a_{i,h})$, either deterministic or stochstic. Given a position $s_{i,h}$ for agent $i$, the local area it can cover is denoted by $\hat{s}_{i,h}\subset [N]\times [N]$ which includes the grid $s_{i,h}$ and neighboring grids, e.g., the blue area in Figure~\ref{fig:multi-agent dynamic covering game}. \footnote{Without causing confusion, we use the notation $s_{i,h}$ to denote the local state of agent $i$ and $\hat{s}_{i,h}$ to denote the local area it can cover from the position.} The social welfare function $v$ at time step $h$ is how much treasure the agents cover at time step $h$:
\begin{equation*}
   \textstyle v(s_h):= \sum_{(j,k)\in \cup_{i=1}^n \hat{s}_{i,h}} w_{jk}
\end{equation*}
For different choices of stage reward function $r_{i,h}$, agents might converge to different NEs. Here we mainly look into the performance of the following three different types of reward functions:
\begin{enumerate}
    \item Identical-interest (II): $r_{i,h}(s_h) = v(s_h)$. The most natural idea might be setting all agents' stage rewards to be the same as the social welfare $v$, since this aligns with the ultimate objective. 
    \item Marginal-contribution (MC). $r_{i,h}(s_h) = v(s_h) - v(\emptyset,s_{-i,h})$. This type of reward design is also known as `marginal contribution' \cite{vetta2002nash} in static covering games/submodular games. The reward of agent $i$ is the change in the total reward if agent $i$ is removed. It is easy to see that this design is equivalent to setting $r_i(s_h) = \sum_{(j,k)\in \hat{s}_{i,h}}w_{jk}\mathbf{1}\{\textup{Only } i \textup{ covers } (j,k)\}$.
    \item Utility-sharing (US) \cite{gairing2009covering}. $r_i(s_h) = \sum_{(j,k)\in \hat{s}_{i,h}}w_{jk}\cdot\\f(\# \textup{agents that cover\ }(j,k))$, where $f$ is a monotonically decreasing function. Here the reward for covering $(j,k)$ is $w_{jk}f(\# \textup{agents that cover\ } (j,k))$. The agents will get a smaller reward for covering a grid that is already covered by many other agents. This type of reward assignment is also closely related to Shapley value and its variants. More discussions are available in \cite{shapley1951notes,DataShapley,BetaShapley}.
\end{enumerate}

\subsection{Smoothness and PoA Bounds}
In this section, we establish the smoothness and PoA bounds for all the above discussed reward functions.
\begin{lemma}\label{lemma:smoothness-covering-game}
\begin{enumerate}
    \item The sets of NEs for identical-interest and marginal-contribution are the same.
    \item For marginal-contribution, the Markov game is $(1,1)$-smooth, and thus the PoA for both marginal-contribution and identical-interest is bounded by $\PoA(\cM)\ge \frac{1}{2}$
    \item For untility-sharing, the Markov game is $(1,\mu_f)$-smooth, where $\mu_f = \sup_n nf(n) - f(n+1)$. More specifically, for 
    \vspace{-5pt}
    \begin{equation}\label{eq:utility-f}
       \textstyle f(n) = \frac{(n-1)!\sum_{i=n}^{+\infty} \frac{1}{i!}}{e-1},
    \end{equation}
    $\mu_f = \frac{1}{e-1}$ and thus the price of anarchy is bounded by $\PoA(\cM) \ge 1 - \frac{1}{e}$.
\end{enumerate}
\begin{proof}
    The first statement is trivial from the fact that II and MC share the same stage potential function, which can be chosen as the social welfare i.e., $\phi = v$. The smoothness condition for the second statement and third statement follows naturally from the smoothness analysis for the static covering game (see, e.g., \cite{roughgarden2015intrinsic} for the second statement and \cite{gairing2009covering} for the third statement). And the PoA bounds follow from Theorem \ref{thm:PoA-smooth-games}.
\end{proof}
\end{lemma}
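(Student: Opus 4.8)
The plan is to reduce all three claims to purely stage-wise (static covering-game) statements and then lift them with the machinery already in hand: Theorem \ref{thm:PoA-smooth-games} for the PoA bounds, and the potential identity Equation \eqref{eq:MPG-def} for the NE claim. This is legitimate because both Definition \ref{defi:smoothness} and Definition \ref{defi:MPG} are conditions on the stage rewards $r_{i,h}$ and welfare $v_h$ at a single fixed horizon $h$, and the covering-game reward structure has the same form at every $h$; so I never need to reason about value functions or the dynamics directly. Throughout I write $w(S)=\sum_{(j,k)\in S} w_{jk}$ for the weight of a region and $v(s_h)=w(\cup_i \hat s_{i,h})$.

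For the first statement I would exhibit $\phi_h=v_h$ as a common stage potential. For II this is immediate since $r_{i,h}=v$. For MC the key observation is that $r_{i,h}(s_h)=v(s_h)-v(\emptyset,s_{-i,h})$ and the subtracted term does not depend on $s_{i,h}$; hence when agent $i$ deviates with $s_{-i,h}$ fixed, $r_{i,h}(s_{i,h}',s_{-i,h})-r_{i,h}(s_{i,h},s_{-i,h})=v(s_{i,h}',s_{-i,h})-v(s_{i,h},s_{-i,h})$, which is exactly the condition of Definition \ref{defi:MPG} with $\phi_h=v_h$. Since II and MC then share the same total potential $\Phi$, Equation \eqref{eq:MPG-def} shows that the NE condition $J_i(\pi)\ge J_i(\pi_i',\pi_{-i})$ is, for both games, equivalent to $\Phi(\pi)\ge\Phi(\pi_i',\pi_{-i})$, so the two NE sets coincide.

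For the MC smoothness I would verify the stage inequality of Definition \ref{defi:smoothness} with $\lambda=\mu=1$ via submodularity of coverage, using two facts: (i) $\sum_i r_{i,h}(s_h)\le v(s_h)$, since the MC reward of $i$ equals $w(\hat s_{i,h}\setminus\cup_{j\ne i}\hat s_{j,h})$ and the sum therefore counts each grid at most once; and (ii) $\sum_i r_{i,h}(s_{i,h}^\star,s_{-i,h})\ge v(s_h^\star)-v(s_h)$, obtained by lower-bounding each term $w(\hat s_{i,h}^\star\cup\hat s_{-i,h})-w(\hat s_{-i,h})$ by the marginal of adding $\hat s_{i,h}^\star$ to the full covered region $\cup_j\hat s_{j,h}$ (submodularity), then applying subadditivity of marginals and monotonicity. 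Subtracting gives $\sum_i[r_{i,h}(s_h)-r_{i,h}(s_{i,h}^\star,s_{-i,h})]\le 2v(s_h)-v(s_h^\star)$, so Theorem \ref{thm:PoA-smooth-games} yields $\PoA\ge 1/2$ for MC. The conceptually important point is that the II bound does \emph{not} follow from smoothness directly (II's stage rewards fail this inequality); it transfers because, by the first statement, II and MC have identical NE sets and share the same welfare $W$, so $\min_{\pi^\NE}W(\pi^\NE)$, and hence the PoA, is the same.

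For utility sharing I would prove the $(1,\mu_f)$ stage inequality grid by grid. Fix a grid $g$ of weight $w_g$ and let $c$ (resp. $c^\star$) be the number of agents covering $g$ under $s_h$ (resp. $s_h^\star$); the contribution of $g$ is $w_g\, c f(c)$ to $\sum_i r_{i,h}(s_h)$, at least $w_g\, c^\star f(c+1)$ to $\sum_i r_{i,h}(s_{i,h}^\star,s_{-i,h})$ (a deviating agent at its starred position raises the count to at most $c+1$, and $f$ is decreasing), $w_g\mathbf 1\{c\ge1\}$ to $v(s_h)$, and $w_g\mathbf 1\{c^\star\ge1\}$ to $v(s_h^\star)$. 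It then remains to check $cf(c)-c^\star f(c+1)\le(1+\mu_f)\mathbf 1\{c\ge1\}-\mathbf 1\{c^\star\ge1\}$ by cases on whether $c,c^\star$ are zero, using $\mu_f\ge cf(c)-f(c+1)$ and the normalization $f(1)=1$; summing over grids gives smoothness. Finally, for the specific $f$ in Equation \eqref{eq:utility-f}, a one-line telescoping shows $nf(n)-f(n+1)=\frac{n!\,(1/n!)}{e-1}=\frac{1}{e-1}$ for \emph{every} $n$, so $\mu_f=\frac{1}{e-1}$ and $\PoA\ge\frac{1}{1+\mu_f}=1-\frac1e$. I expect the main work to be the case analysis in this grid inequality (and confirming $f(1)=1$), while the MC argument and the telescoping are short; the one subtlety worth flagging is that the II bound must be routed through NE-equivalence rather than a direct smoothness check.
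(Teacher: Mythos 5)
Your proposal is correct and follows essentially the same route as the paper: statement 1 via the shared stage potential $\phi = v$ together with Equation \eqref{eq:MPG-def}, and statements 2 and 3 by verifying stage-wise $(\lambda,\mu)$-generalized smoothness and invoking Theorem \ref{thm:PoA-smooth-games}. The only difference is one of self-containedness: you prove the static smoothness facts directly (the submodularity/telescoping argument for marginal-contribution and the grid-by-grid case analysis with $f(1)=1$ for utility-sharing) where the paper cites \cite{roughgarden2015intrinsic} and \cite{gairing2009covering}, and you make explicit the point, left implicit in the paper's ``thus,'' that the identical-interest PoA bound must be routed through the NE-set equivalence of statement 1 because the II rewards themselves do not satisfy the smoothness inequality.
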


\subsection{Numerical Simulations}
In this section, we test the MA-SPI algorithm (Algorithm \ref{alg:soft-PI}) in a dynamic covering game on a 2D grid of size 7. The state space of any agent $i$ consists of the $49$ grids. Treasure $w=1$ is set in grids $(0,0), (0,1), (1,0)$, $(0,5),(0,6),(1,6), (5,0),(6,0),(6,1)$, i.e. in three corners of the grid, as is shown in Figure~\ref{fig:multi-agent dynamic covering game}. 

There are $3$ agents in the game, whose initial positions are randomly selected uniformly among the $49$ grids. Each agent has $4$ actions, i.e. up, right, down and left. When any agent takes any action, the exact action is executed with probability $2/3$, a random action of the four is executed with probability $1/3$. When an action is executed, the agent transits to the corresponding grid or stays still if the corresponding grid is outside of the 2D plane. 

Every agent interacts with the environment for $H=10$ time steps. At every time step, every agent covers an area with size $3$ centered at its state, and receives reward according to the reward function and the treasures in its area of coverage. 

We first verify that a covering game forms a MPG with any of the three types of reward functions. For II and MC reward designs, it is obvious that the welfare function $v$ can serve as the stage potential function $\phi$ in Equation \eqref{eq:MPG-phi}. For utility-sharing, the stage potential function can be chosen as \cite{gairing2009covering}:
\begin{equation*}
    \textstyle \phi(s_h,a_h) = \sum_{(j,k)\in \hat{s}_h} w_{j,k} \sum_{l=1}^{\# \textup{agents that cover}\ (j,k)} f(l),
\end{equation*}
which satisfies the condition in Equation \eqref{eq:MPG-phi}. For any reward design, function $\phi$ gives the potential function $\Phi$ following Equation \eqref{eq:MPG-Phi}, and therefore proves that the game is an MPG.

Moreover, it can be verified by induction that this game satisfies Assumption \ref{assump:sufficient-exploration} with $c = \frac{1}{49}\times (\frac{1}{6})^{10}$. We simulate Algorithm~\ref{alg:soft-PI} in the environment described above. In the actual training process, the algorithm parameters are set to $T_G = 40, T_J = 800, T_K = 50000$, which is sufficient to show convergence for our algorithm.  For the US reward design, the utility function $f$ is set to Equation \eqref{eq:utility-f}.

\begin{figure}
    \centering
    \begin{minipage}{.49\columnwidth}
        \centering
        \includegraphics[width=\textwidth]{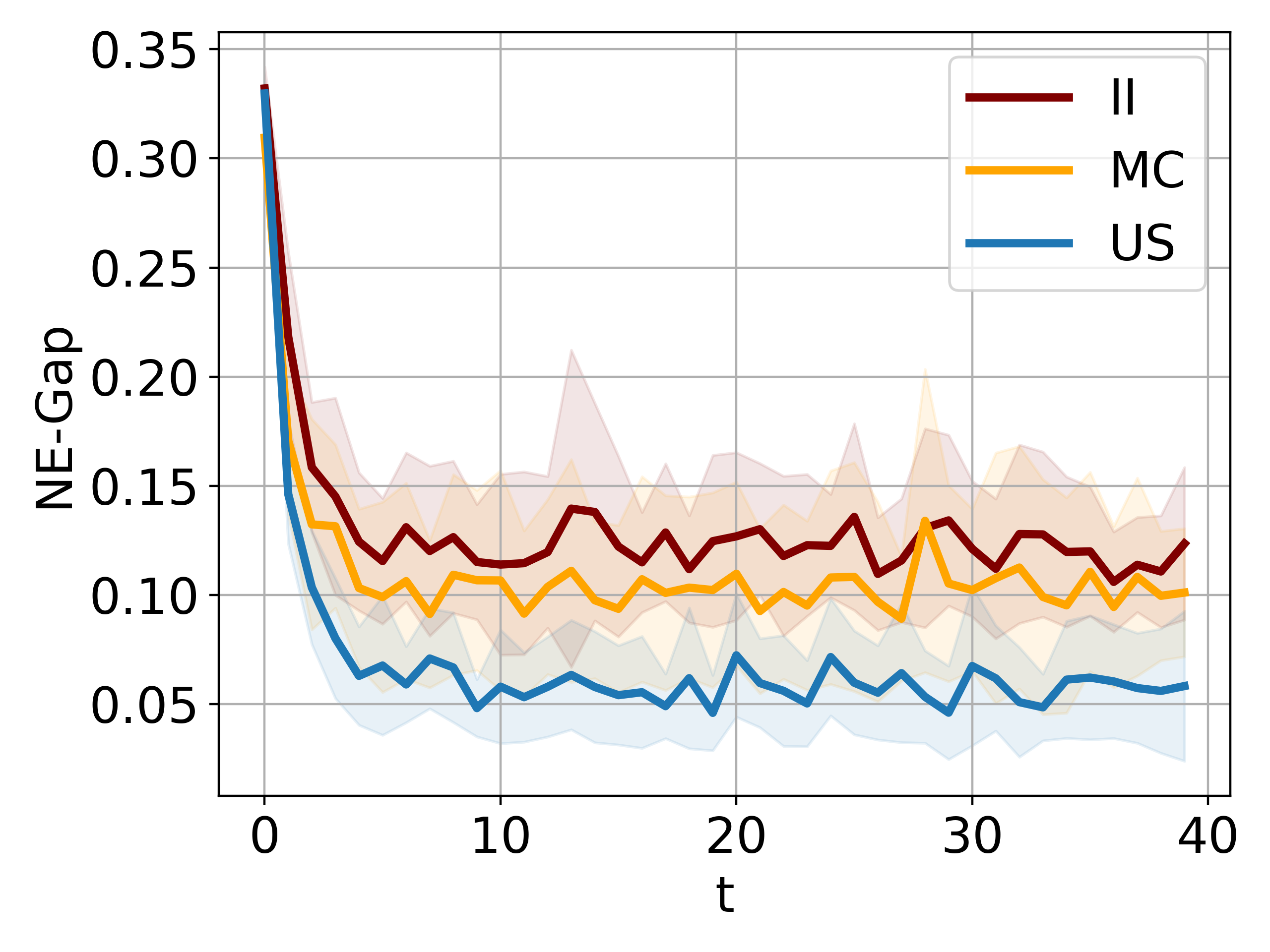}
    \end{minipage}%
    \begin{minipage}{.49\columnwidth}
        \centering
        \includegraphics[width=\textwidth]{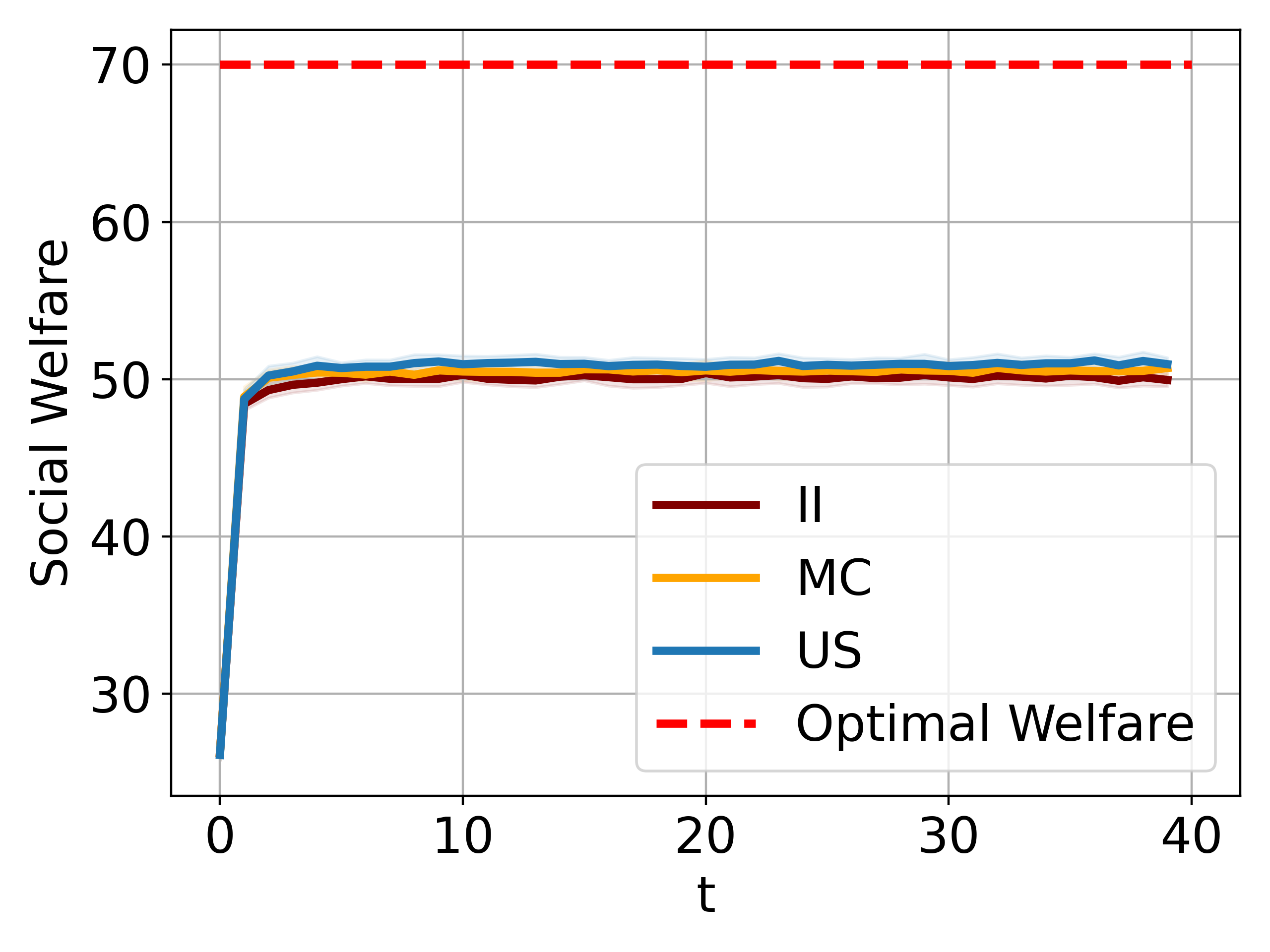}
    \end{minipage}
    \caption{\textbf{Left}: Social NE-Gap of Algorithm~\ref{alg:soft-PI}; \textbf{Right}: Social Welfare of Algorithm~\ref{alg:soft-PI};}
    \label{fig:sim}
    \vspace{-15pt}
\end{figure}

Figure~\ref{fig:sim} shows the NE-Gap and the social welfare in the training process for all three types of reward functions respectively. For all the reward designs, the NE-gap converges to almost zero, and the social welfare achieves around $50$, with the PoA being approximately $\text{PoA} \approx 50/70 > 1-\frac{1}{e}$, which aligns with Lemma~\ref{lemma:smoothness-covering-game}. 

\section{CONCLUSIONS AND FUTURE WORKS}
This paper studies the price of anarchy (PoA) and efficient learning of a specific type of Markov games with decoupled dynamics. Firstly, the generalized smoothness condition is proposed to obtain a unified lower bound of PoA. For a specific type of Markov games known as the Markov potential games, we propose a distributed learning algorithm, multi-agent soft policy iteration (MA-SPI), which provably converges to a Nash equilibrium. Our results are also validated using a dynamic covering game example.

Our current results have several limitation as discussed in the paper, such as the restriction to local policy classes and the relatively strong assumption of sufficient exploration. The limitations open up several interesting avenues for future research. For example, it would be valuable to investigate how to achieve efficient learning and PoA analysis for broader equilibrium notions beyond Nash equilibrium, such as coarse correlated equilibria. In addition, we plan to explore algorithm design for more general policy classes where agents have access not only to their local states but also to their neighbors' information. Ultimately, we hope that this research will provide deeper insights into real-world multi-agent problems, such as multi-agent motion planning and sensor coverage problems. 

%




\bibliographystyle{ieeetr}
\bibliography{CDC2023/bib.bib}
\newpage

\onecolumn
\appendix

\textbf{Notations:} Apart from the averaged Q-functions, we also define the averaged advantage functions as follows:
\begin{equation*}
    \oA^\pi_{i,h}(s_i,a_i):= \oQ^\pi_{i,h}(s_i,a_i) - \sum_{a_i'}\pi_{i,h}(a_i'|s_i)\oQ^\pi_{i,h}(s_i,a_i')
\end{equation*}
We also define the $\ell_1$-norm on the policy space as:
\begin{equation}\label{eq:policy-L-1}
    \|\pi_{i,h}' - \pi_{i,h}\|_1:= \max_{s_i} \|\pi_{i,h}'(\cdot|s_i)- \pi_{i,h}(\cdot|s_i)\|_1 = \max_{s_i}\sum_{a_i}\left|\pi_{i,h}'(a_i|s_i)- \pi_{i,h}(a_i|s_i)\right|
\end{equation}

\subsection{Performance difference lemma and policy gradient theorem}
\begin{lemma}\label{lemma:performance-difference}
\begin{equation*}
    J_i(\pi_i',\pi_{-i}) - J_i(\pi_i,\pi_{-i}) = \sum_{h=1}^H \sum_{s_i,a_i}d^{\pi_i'}_{i,h}(s_i)(\pi'_{i,h}(a_i|s_i) - \pi_{i,h}(a_i|s_i))\oQ^\pi_{i,h}(s_i,a_i)
\end{equation*}
\begin{proof}
Let $\pi':= (\pi_i',\pi_{-i})$. From the standard performance difference lemma we have that
\begin{align*}
    &\quad J_i(\pi_i',\pi_{-i}) - J_i(\pi_i,\pi_{-i}) = \sum_{h=1}^H \sum_{s,a}d_h^{\pi'}(s)(\pi_h'(a|s)-\pi_h(a|s))Q_{i,h}^\pi(s,a)\\
    &= \sum_{h=1}^H \sum_{s,a}d^{\pi_i'}_{i,h}(s_i)\prod_{j\neq i} d^{\pi_j}_{j,h}(s_j) (\pi'_{i,h}(a_i|s_i) - \pi_{i,h}(a_i|s_i))\prod_{j\neq i}\pi_{j,h}(a_j|s_j) Q_{i,h}^\pi(s_i,s_{-i}, a_i, a_{-i})\\
    &= \sum_{h=1}^H \sum_{s_i,a_i}d^{\pi_i'}_{i,h}(s_i)(\pi'_{i,h}(a_i|s_i) - \pi_{i,h}(a_i|s_i))\sum_{s_{-i},a_{-i}}\prod_{j\neq i} d^{\pi_j}_{j,h}(s_j)\prod_{j\neq i}\pi_{j,h}(a_j|s_j) Q_{i,h}^\pi(s_i,s_{-i}, a_i, a_{-i})\\
    &=  \sum_{h=1}^H \sum_{s_i,a_i}d^{\pi_i'}_{i,h}(s_i)(\pi'_{i,h}(a_i|s_i) - \pi_{i,h}(a_i|s_i))\oQ^\pi_{i,h}(s_i,a_i),
\end{align*}
which completes the proof.
\end{proof}
\end{lemma}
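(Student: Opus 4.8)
The plan is to derive this as a \emph{decoupled} specialization of the standard single-agent performance difference lemma (PDL), exploiting the product structure of the state distribution and of the local joint policy. First I would treat the two joint policies $\pi' := (\pi_i', \pi_{-i})$ and $\pi := (\pi_i, \pi_{-i})$ as ordinary global policies for agent $i$'s reward stream, and invoke the standard PDL in the form
\begin{equation*}
J_i(\pi') - J_i(\pi) = \sum_{h=1}^H \sum_{s,a} d_h^{\pi'}(s)\big(\pi_h'(a|s) - \pi_h(a|s)\big) Q_{i,h}^\pi(s,a).
\end{equation*}
This handles the temporal part of the argument and reduces everything to an algebraic manipulation of the summand.

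Next I would use the two structural facts established earlier in the paper. Because dynamics are decoupled and only local policies are used, the visitation distribution factorizes as $d_h^{\pi'}(s) = d_{i,h}^{\pi_i'}(s_i)\prod_{j\neq i}d_{j,h}^{\pi_j}(s_j)$; crucially, since agents $j\neq i$ keep $\pi_j$ under both $\pi'$ and $\pi$, their marginals are $d_{j,h}^{\pi_j}$ regardless of whether agent $i$ plays $\pi_i$ or $\pi_i'$. Similarly the joint policies factorize, and the common factor $\prod_{j\neq i}\pi_{j,h}(a_j|s_j)$ cancels inside the difference, leaving
\begin{equation*}
\pi_h'(a|s) - \pi_h(a|s) = \big(\pi_{i,h}'(a_i|s_i) - \pi_{i,h}(a_i|s_i)\big)\prod_{j\neq i}\pi_{j,h}(a_j|s_j).
\end{equation*}

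Then I would split the global sum over $(s,a)$ into an outer sum over $(s_i,a_i)$ and an inner sum over $(s_{-i},a_{-i})$. Pulling $d_{i,h}^{\pi_i'}(s_i)$ and the agent-$i$ policy difference outside, the inner sum becomes exactly $\sum_{s_{-i},a_{-i}} \prod_{j\neq i} d_{j,h}^{\pi_j}(s_j)\prod_{j\neq i}\pi_{j,h}(a_j|s_j)\, Q_{i,h}^\pi(s_i,s_{-i},a_i,a_{-i})$, which is precisely the averaged Q-function $\oQ^\pi_{i,h}(s_i,a_i)$ from Equation \eqref{eq:averaged-Q}. Substituting recovers the claimed identity.

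The argument is essentially bookkeeping once the PDL is in place; the only real point to be careful about is the first structural observation — that the $-i$ marginals and the $-i$ policy factors are genuinely identical under $\pi'$ and $\pi$, so that they cancel cleanly and reassemble into the averaging operator defining $\oQ^\pi_{i,h}$. This is exactly where the decoupled dynamics and the local policy class are used, and it is the step I would verify most carefully.
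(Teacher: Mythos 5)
Your proposal is correct and follows essentially the same route as the paper's proof: invoke the standard performance difference lemma on the joint policies, factor the visitation distribution and policy difference using the decoupled dynamics and local policy class, and recognize the inner sum over $(s_{-i},a_{-i})$ as the averaged Q-function $\oQ^\pi_{i,h}$. Your emphasis on verifying that the $-i$ marginals coincide under $\pi'$ and $\pi$ is exactly the point the paper's derivation relies on, so nothing is missing.
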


\subsection{Proof of Lemma \ref{lemma:averaged-Q-bellman-eq}}\label{sec:bellman-eq-proof}
\begin{proof}
    \begin{align*}
    &\quad \oQ^\pi_{i,h}(s_i,a_i) = \sum_{s_{-i}} \prod_{j\neq i} d_{j,h}^{\pi_j}(s_j)\sum_{a_{-i}}  \prod_{j\neq i}\pi_{j,h}(a_j|s_j) Q_{i,h}^\pi(s_i,s_{-i}, a_i, a_{-i})\\
    &= \sum_{s_{-i}} \prod_{j\neq i} d_{j,h}^{\pi_j}(s_j)\sum_{a_{-i}}  \prod_{j\neq i}\pi_{j,h}(a_j|s_j)\left(r_{i,h}(s,a)+\sum_{s',a'}P_h(s'|s,a)\pi_{h+1}(a'|s')Q_{i,h+1}^\pi(s',a')\right)\\
    &=\sum_{s_{-i}} \prod_{j\neq i} d_{j,h}^{\pi_j}(s_j)\sum_{a_{-i}}  \prod_{j\neq i}\pi_{j,h}(a_j|s_j)\left(r_{i,h}(s,a)+\sum_{s',a'}\prod_j P_{j,h}(s_j'|s_j,a_j)\pi_{j,h+1}(a_j'|s_j')Q_{i,h+1}^\pi(s',a')\right)\\
    &=\orh(s_i,a_i) + \sum_{s',a',s_{-i},a_{-i}}\prod_j\pi_{j,h+1}(a_j'|s_j')P_{j,h}(s_j'|s_j,a_j)\prod_{j\neq i} d_{j,h}^{\pi_j}(s_j)\pi_{j,h}(a_j|s_j)Q_{i,h+1}^\pi(s',a')\\
    &=\orh(s_i,a_i) + \sum_{s',a'}\prod_j\pi_{j,h+1}(a_j'|s_j') P_{i,h}(s_i'|s_i,a_i)\left(\sum_{s_{-i},a_{-i}}\prod_{j\neq i}P_{j,h}(s_j'|s_j,a_j)d_{j,h}^{\pi_j}(s_j)\pi_{j,h}(a_j|s_j)\right)Q_{i,h+1}^\pi(s',a')\\
    &= \orh(s_i,a_i) + \sum_{s',a'}\prod_j\pi_{j,h+1}(a_j'|s_j') P_{i,h}(s_i'|s_i,a_i)\prod_{j\neq i} d_{j,h+1}^{\pi_j}(s_j')Q_{i,h+1}^\pi(s',a')\\
    &=\orh(s_i,a_i) + \sum_{s_i',a_i'}P_{i,h}(s_i'|s_i,a_i)\pi_{i,h+1}(a_i'|s_i')\left(\sum_{s_{-i}',a_{-i'}}\prod_{j\neq i} d_{j,h+1}^{\pi_j}(s_j')\pi_{j,h+1}(a_j'|s_j')Q_{i,h+1}^\pi(s',a')\right)\\
    &= \orh(s_i,a_i) + \sum_{s_i',a_i'}P_{i,h}(s_i'|s_i,a_i)\pi_{i,h+1}(a_i'|s_i')\oQ_{i,h+1}^\pi(s_i',a_i')
    \end{align*}
\end{proof}

\subsection{Proof of Theorem \ref{thm:sample-complexity}}
\begin{proof}
    From Lemma \ref{lemma:inexact-soft-PI}, in order for Equation \eqref{eq:thm-eq-1} and Equation \eqref{eq:thm-eq-2} to hold, we can manually set
    \begin{align}
        \frac{\sqrt{\placeholder}(\Phi_{\max}-\Phi_{\min} + 1+\ln(T_G))}{2c\sqrt{T_G}}\le \frac{\epsilon}{2}\label{eq:TG-ineq}\\
        \frac{2nH\epsilon_Q}{c}\le \frac{\epsilon}{2}\label{eq:epsilonQ-ineq}.
    \end{align}
    From Equation \eqref{eq:TG-ineq} we get
    \begin{equation*}
        \frac{(\Phi_{\max}-\Phi_{\min} + 1+ \ln(T_G))}{\sqrt{T_G}} \le \frac{c\epsilon}{\sqrt{\placeholder}}.
    \end{equation*}
   From Lemma \ref{lemma:lnT/T-auxiliary}, we can set
    \begin{equation*}
        T_G \ge \frac{6\placeholder\left((\Phi_{\max}-\Phi_{\min} +1)/2 + \ln(4\sqrt{\placeholder}c^{-1}\epsilon^{-1})\right)^2}{c^2\epsilon^2},
    \end{equation*}
    which will guarantee Equation \eqref{eq:TG-ineq} to hold.

    From Equation \eqref{eq:epsilonQ-ineq}, we get $\epsilon_Q\le \frac{c\epsilon}{4nH}$. According to Lemma \ref{lemma:Q-hat-bound}, we can manually set
    \begin{equation*}
        \epsilon_r = \frac{c\epsilon}{8nH^2}\quad \epsilon_P=\frac{c\epsilon}{8nH^3\max_i|S_i|}.
    \end{equation*}
    From Lemma \ref{lemma:bound-r-P-hat}, by setting
    \begin{equation}\label{eq:T_J}
        T_J\ge \frac{32\ln(8\delta^{-1}nHT_G\sum_i|S_i||A_i|)}{c^2\epsilon_r^2} = \frac{2048n^2H^4\ln(8\delta^{-1}nHT_G\sum_i|S_i||A_i|)}{c^4\epsilon^2},
    \end{equation}
    it can be guaranteed that
    \begin{equation*}
    \Pr\left(\left|\widehat{r}_{i,h}^{\pi^{(t)}}(s_i,a_i)-\overline{r}_{i,h}^{\pi^{(t)}}(s_i,a_i)\right|\ge\epsilon_r \right) \le \frac{\delta}{2nHT_G\sum_i|S_i||A_i|}, ~~\forall i\in [n],h\in[H], s_i\in\cS_i, a_i\in\cA_i, 1\le t\le T_G,
    \end{equation*}
    thus
    \begin{equation*}
        \Pr\left(\left|\widehat{r}_{i,h}^{\pi^{(t)}}(s_i,a_i)-\overline{r}_{i,h}^{\pi^{(t)}}(s_i,a_i)\right|\ge\epsilon_r,~\forall i\in [n],h\in[H], s_i\in\cS_i, a_i\in\cA_i, 1\le t\le T_G\right) \le \frac{\delta}{2}
    \end{equation*}
    Similarly, from Lemma \ref{lemma:bound-r-P-hat}, by setting
\begin{equation}\label{eq:T_K}
    T_K \ge \frac{32\max_i|A_i|^2\ln(8\delta^{-1}nHT_G\sum_i|S_i|^2|A_i|)}{c^2\epsilon_P^2} = \frac{2048n^2\max_i|S_i|^2\max_i|A_i|^2H^6\ln(8\delta^{-1}nHT_G\sum_i|S_i|^2|A_i|)}{c^4\epsilon^2},
\end{equation}
it can be guaranteed that 
\begin{equation*}
    \Pr\left(\left|\widehat{P}_{i,h}^{\pi^{(t)}}(s_i'|s_i,a_i)-\overline{P}_{i,h}^{\pi^{(t)}}(s_i'|s_i,a_i)\right|\ge\epsilon_r,~\forall i\in [n],h\in[H], s_i',s_i\in\cS_i, a_i\in\cA_i, 1\le t\le T_G\right) \le \frac{\delta}{2}.
\end{equation*}

Thus, with the choice of $T_J$ and $T_K$ as in Equation \eqref{eq:T_J} and Equation \eqref{eq:T_K}, from Lemma \ref{lemma:Q-hat-bound}, with probability at least $1-\delta$, 
\begin{equation*}
    \max_{s_i,a_i}\left|(\hQ^{\pi^{(t)}}_{i,h}-\oQ^{\pi^{(t)}}_{i,h})(s_i,a_i)\right|\le \epsilon_Q,~~\forall~i\in[n], h\in[H], 1\le t\le T_G,
\end{equation*}
Thus, applying Lemma \ref{lemma:inexact-soft-PI}, we get
\begin{align*}
    \sum_{t=1}^{T_G} \frac{1}{\sqrt{t}}\sum_{i=1}^n\NEgap_i(\pi^{(t)}) &\le \frac{\epsilon\sqrt{T_G}}{2} + \frac{\epsilon\sum_{t=1}^{T_G}\frac{1}{\sqrt{t}}}{2},\\
    \min_{1\le t\le T_G}\sum_{i=1}^n\NEgap_i(\pi^{(t)}) &\le\epsilon,
\end{align*}
which completes the proof.
\end{proof}

\subsection{Proof of Lemma \ref{lemma:inexact-soft-PI}}
\begin{proof}[Proof of Lemma \ref{lemma:inexact-soft-PI}]
From Lemma \ref{lemma:sufficient-descent},
\begin{align*}
    \Phi(\pi^{(t+1)}) - \Phi(\pi^{(t)})&\ge c\eta_t\sum_{i=1}^n \sum_{h=1}^H \max_{s_i,a_i}\oA^{(t)}_{i,h}(s_i,a_i) -2\eta_t nH\epsilon_Q -\placeholder\eta_t^2\\
    &\stackrel{\textup{Lemma \ref{lemma:NE-gap-bound}}}{\ge} c\eta_t \sum_{i=1}^n \NEgap_i(\pi^{(t)}) -2\eta_t nH\epsilon_Q  - \placeholder\eta_t^2\\
    &= \frac{c}{\sqrt{\placeholder t}} \sum_{i=1}^n\NEgap_i(\pi^{(t)}) -\frac{2nH\epsilon_Q}{\sqrt{\placeholder t}} -\frac{1}{t}.
\end{align*}

Using telescoping we have that
\begin{align*}
    &\Phi_{\max} - \Phi_{\min} \ge \frac{c}{\sqrt{\placeholder}}\sum_{t=1}^{T_G} \frac{1}{\sqrt{t}}\sum_{i=1}^n\NEgap_i(\pi^{(t)}) -\frac{2nH\epsilon_Q}{\sqrt{\placeholder}}\sum_{t=1}^{T_G}\frac{1}{\sqrt{t}}- \sum_{t=1}^{T_G} \frac{1}{t}\\
    \Longrightarrow~~ &\sum_{t=1}^{T_G} \frac{1}{\sqrt{t}}\sum_{i=1}^n\NEgap_i(\pi^{(t)}) \le \frac{\sqrt{\placeholder}}{c}\left(\Phi_{\max} - \Phi_{\min} + 1+\ln(T_G)\right) + \frac{2nH\epsilon_Q}{c}\sum_{t=1}^{T_G}\frac{1}{\sqrt{t}}\\
    \Longrightarrow~~ &\min_{1\le t\le T_G}\sum_{i=1}^n\NEgap_i(\pi^{(t)}) \le \frac{\sqrt{\placeholder}\left(\Phi_{\max} - \Phi_{\min} + 1+\ln(T_G)\right) }{c\sum_{t=1}^{T_G}\frac{1}{\sqrt{t}}}+ \frac{2nH\epsilon_Q}{c} \\
    &\qquad\qquad\qquad\qquad\qquad\quad\le\frac{\sqrt{\placeholder} \left(\Phi_{\max} - \Phi_{\min} + 1+ \ln(T_G)\right) }{2c\sqrt{T_G} }+ \frac{2nH\epsilon_Q}{c}
\end{align*}
\end{proof}

\begin{lemma}\label{lemma:sufficient-descent}
Suppose
\begin{equation*}
    \max_{s_i,a_i}\left|(\hQh-\oQh)(s_i,a_i)\right|\le \epsilon_Q,~~\forall~i, h,
\end{equation*}
then running Equation Equation \eqref{eq:inexact-soft-PI} with will guarantee that
    \begin{equation*}
        \Phi(\pi^{(t+1)}) - \Phi(\pi^{(t)})\ge c\eta_t\sum_{i=1}^n \sum_{h=1}^H \max_{s_i,a_i}\oA^{(t)}_{i,h}(s_i,a_i) - 2 \eta_t nH\epsilon_Q - 4n^2H^3\eta_t^2
    \end{equation*}
\begin{proof}
Define the following auxiliary variables $\widetilde\pi^i:= (\pi_1^{(t)} ,\dots, \pi_{i-1}^{(t)}, \pi_i^{(t+1)}, \dots, \pi_n^{(t+1)})$, then
    \begin{align*}
        \Phi(\pi^{(t+1)}) - \Phi(\pi^{(t)}) &= \sum_{i=1}^n \Phi(\widetilde \pi^i) - \Phi(\widetilde \pi^{i+1})\\
        &= \sum_{i=1}^n J_i(\widetilde \pi^i) - J_i(\widetilde \pi^{i+1})\\
        &\stackrel{\textup{Lemma \ref{lemma:performance-difference}}}{=} \sum_{i=1}^n \sum_{h=1}^H \sum_{s_i,a_i}d^{\widetilde\pi^i}_{i,h}(s_i)(\pi^{(t+1)}_{i,h}(a_i|s_i) - \pi^{(t)}_{i,h}(a_i|s_i))\oQ^{\widetilde\pi^{i+1}}_{i,h}(s_i,a_i)\\
        &= \underbrace{\sum_{i=1}^n \sum_{h=1}^H \sum_{s_i,a_i}d^{\widetilde\pi^i}_{i,h}(s_i)(\pi^{(t+1)}_{i,h}(a_i|s_i) - \pi^{(t)}_{i,h}(a_i|s_i))\oQ^{(t)}_{i,h}(s_i,a_i)}_{\textup{Part I}}\\
        &+ \underbrace{\sum_{i=1}^n \sum_{h=1}^H \sum_{s_i,a_i}d^{\widetilde\pi^i}_{i,h}(s_i)(\pi^{(t+1)}_{i,h}(a_i|s_i) - \pi^{(t)}_{i,h}(a_i|s_i))(\oQ^{\widetilde\pi^{i+1}}_{i,h}(s_i,a_i)-\oQ^{(t)}_{i,h}(s_i,a_i))}_{\textup{Part II}}.
    \end{align*}
From Equation Equation \eqref{eq:inexact-soft-PI}, we have that
\begin{align*}
    \textup{Part I} &= \eta_t \sum_{i=1}^n \sum_{h=1}^H \sum_{s_i}d^{\widetilde\pi^i}_{i,h}(s_i)\sum_{a_i}\left(\mathbf{1}\{a_i = \argmax_{a_i'} \hQ_{i,h}^{(t)}(s_i,a_i)\}  - \pi_{i,h}^{(t)}(a_i|s_i)\right)\oQ^{(t)}_{i,h}(s_i,a_i) \\
    &=\eta_t \sum_{i=1}^n \sum_{h=1}^H \sum_{s_i}d^{\widetilde\pi^i}_{i,h}(s_i)\sum_{a_i}\left(\mathbf{1}\{a_i = \argmax_{a_i'} \hQ_{i,h}^{(t)}(s_i,a_i)\} - \mathbf{1}\{a_i = \argmax_{a_i'} \oQ_{i,h}^{(t)}(s_i,a_i)\}\right)\oQ^{(t)}_{i,h}(s_i,a_i) \\
    &\quad +\eta_t \sum_{i=1}^n \sum_{h=1}^H \sum_{s_i}d^{\widetilde\pi^i}_{i,h}(s_i)\sum_{a_i}\left(\mathbf{1}\{a_i = \argmax_{a_i'} \oQ_{i,h}^{(t)}(s_i,a_i)\}  - \pi_{i,h}^{(t)}(a_i|s_i)\right)\oQ^{(t)}_{i,h}(s_i,a_i) \\
    &= \eta_t \sum_{i=1}^n \sum_{h=1}^H \sum_{s_i}d^{\widetilde\pi^i}_{i,h}(s_i)\left(\sum_{a_i}\mathbf{1}\{a_i = \argmax_{a_i'} \hQ_{i,h}^{(t)}(s_i,a_i)\}\hQ^{(t)}_{i,h}(s_i,a_i) - \mathbf{1}\{a_i = \argmax_{a_i'} \oQ_{i,h}^{(t)}(s_i,a_i)\}\oQ^{(t)}_{i,h}(s_i,a_i)\right) \\
    &\quad - \eta_t \sum_{i=1}^n \sum_{h=1}^H \sum_{s_i}d^{\widetilde\pi^i}_{i,h}(s_i)\sum_{a_i}\mathbf{1}\{a_i = \argmax_{a_i'} \hQ_{i,h}^{(t)}(s_i,a_i)\}\left(\hQ^{(t)}_{i,h}(s_i,a_i) - \oQ^{(t)}_{i,h}(s_i,a_i)\right) \\
    &\quad +\eta_t \sum_{i=1}^n \sum_{h=1}^H \sum_{s_i}d^{\widetilde\pi^i}_{i,h}(s_i)\left(\max_{a_i}\oQ^{(t)}_{i,h}(s_i,a_i)  - \sum_{a_i}\pi_{i,h}^{(t)}(a_i|s_i)\oQ^{(t)}_{i,h}(s_i,a_i)\right) \\
    & \ge \eta_t \sum_{i=1}^n \sum_{h=1}^H \sum_{s_i}d^{\widetilde\pi^i}_{i,h}(s_i)\left(\max_{a_i}\hQ^{(t)}_{i,h}(s_i,a_i) - \max_{a_i}\oQ^{(t)}_{i,h}(s_i,a_i)\right)
    - \eta_t n H \epsilon_Q \\
    &\quad +\eta_t \sum_{i=1}^n \sum_{h=1}^H \sum_{s_i}d^{\widetilde\pi^i}_{i,h}(s_i) \max_{a_i}\oA^{(t)}_{i,h}(s_i,a_i)\\
    &\ge c\eta_t\sum_{i=1}^n \sum_{h=1}^H \max_{s_i,a_i}\oA^{(t)}_{i,h}(s_i,a_i) - 2\eta_t n H \epsilon_Q
\end{align*}
\begin{align*}
    |\textup{Part II}|&\le \max_{i,h,s_i,a_i}|\oQ^{\widetilde\pi^{i+1}}_{i,h}(s_i,a_i)-\oQ^{(t)}_{i,h}(s_i,a_i)| \sum_{i=1}^n \sum_{h=1}^H \sum_{s_i,a_i}d^{\widetilde\pi^i}_{i,h}(s_i)|\pi^{(t+1)}_{i,h}(a_i|s_i) - \pi^{(t)}_{i,h}(a_i|s_i)|\\
    &\stackrel{\textup{Lemma \ref{lemma:smoothness-averaged-Q}}}{\le } \left(H \sum_{h=1}^H \sum_{i=1}^n\|\pi_{i,h}^{(t+1)} - \pi_{i,h}^{(t)}\|_1\right)\left(\sum_{h=1}^H \sum_{i=1}^n\|\pi_{i,h}^{(t+1)} - \pi_{i,h}^{(t)}\|_1\right)\\
    &\le 4n^2H^3\eta_t^2\qquad \qquad
    \textup{($\|\pi_{i,h}^{(t+1)} - \pi_{i,h}^{(t)}\|_1\le 2\eta_t$)}
\end{align*}
Combining the bounds we get
\begin{align*}
    \Phi(\pi^{(t+1)}) - \Phi(\pi^{(t)}) &\ge \textup{Part I} - |\textup{Part II}|\ge c\eta_t\sum_{i=1}^n \sum_{h=1}^H \max_{s_i,a_i}\oA^{(t)}_{i,h}(s_i,a_i) - 2\eta_t n H \epsilon_Q -  4n^2H^3\eta_t^2
\end{align*}
\end{proof}
\end{lemma}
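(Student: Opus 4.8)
The plan is to convert the joint soft update into a telescoping sum of single-agent deviations, exploiting the potential structure, and then to bound each deviation from below using the performance difference lemma together with the explicit greedy form of \eqref{eq:inexact-soft-PI} and the exploration constant $c$. Concretely, I would introduce the interpolating policies $\widetilde\pi^i := (\pi_1^{(t)},\dots,\pi_{i-1}^{(t)},\pi_i^{(t+1)},\dots,\pi_n^{(t+1)})$, so that $\widetilde\pi^1=\pi^{(t+1)}$ and $\widetilde\pi^{n+1}=\pi^{(t)}$, and write $\Phi(\pi^{(t+1)})-\Phi(\pi^{(t)})=\sum_{i=1}^n(\Phi(\widetilde\pi^i)-\Phi(\widetilde\pi^{i+1}))$. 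Consecutive interpolants differ only in agent $i$'s policy, so the potential identity \eqref{eq:MPG-def} replaces each term by the single-agent difference $J_i(\widetilde\pi^i)-J_i(\widetilde\pi^{i+1})$, and Lemma \ref{lemma:performance-difference} expresses this as $\sum_{h,s_i,a_i} d_{i,h}^{\widetilde\pi^i}(s_i)(\pi_{i,h}^{(t+1)}-\pi_{i,h}^{(t)})(a_i|s_i)\,\oQ_{i,h}^{\widetilde\pi^{i+1}}(s_i,a_i)$.

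Next I would add and subtract the averaged Q-function $\oQ_{i,h}^{(t)}$ evaluated at the current iterate, splitting the sum into a main term built from $\oQ_{i,h}^{(t)}$ and an error term collecting $\oQ_{i,h}^{\widetilde\pi^{i+1}}-\oQ_{i,h}^{(t)}$. In the main term I substitute \eqref{eq:inexact-soft-PI}, turning the policy difference into $\eta_t(\mathbf{1}\{a_i=\argmax_{a_i'}\hQ_{i,h}^{(t)}\}-\pi_{i,h}^{(t)})$. The crucial manipulation is to compare the action greedy for the \emph{estimated} $\hQ_{i,h}^{(t)}$ with the one greedy for the \emph{true} $\oQ_{i,h}^{(t)}$: since $|\hQ_{i,h}^{(t)}-\oQ_{i,h}^{(t)}|\le\epsilon_Q$ uniformly, swapping one for the other distorts the value by at most $\epsilon_Q$ through two separate channels, contributing $-2\eta_t nH\epsilon_Q$ after summing over $i,h$. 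Once the true greedy action is in place, the surviving contribution at each $(i,h,s_i)$ is exactly $\max_{a_i}\oQ_{i,h}^{(t)}(s_i,a_i)-\sum_{a_i}\pi_{i,h}^{(t)}(a_i|s_i)\oQ_{i,h}^{(t)}(s_i,a_i)=\max_{a_i}\oA_{i,h}^{(t)}(s_i,a_i)$, which is nonnegative. Invoking $d_{i,h}^{\widetilde\pi^i}(s_i)>c$ (Assumption \ref{assump:sufficient-exploration}) together with this nonnegativity, I bound $\sum_{s_i}d_{i,h}^{\widetilde\pi^i}(s_i)\max_{a_i}\oA_{i,h}^{(t)}(s_i,a_i)\ge c\max_{s_i,a_i}\oA_{i,h}^{(t)}(s_i,a_i)$, yielding the advertised first term $c\eta_t\sum_{i,h}\max_{s_i,a_i}\oA_{i,h}^{(t)}$.

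For the error term I would bound $|\oQ_{i,h}^{\widetilde\pi^{i+1}}-\oQ_{i,h}^{(t)}|$ by the aggregate $\ell_1$ policy change from $\pi^{(t)}$ to $\pi^{(t+1)}$, which requires a Lipschitz estimate of the averaged Q-function in the policy that I would establish as an auxiliary lemma (the averaged Q sums $H$ stage rewards in $[0,1]$, so its sensitivity to perturbing the $\pi_{j,h}$'s scales like $H$ times the total $\ell_1$ change). Combining this with $\|\pi_{i,h}^{(t+1)}-\pi_{i,h}^{(t)}\|_1\le 2\eta_t$, which is immediate from \eqref{eq:inexact-soft-PI}, and summing over agents and horizons gives a product of two $O(nH\eta_t)$ factors, hence an error of order $4n^2H^3\eta_t^2$. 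Assembling the main term minus the error term then produces the claimed inequality.

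I expect the main obstacle to be the accounting in the main term: one must verify that replacing the empirically greedy action by the truly greedy action costs only $O(\epsilon_Q)$ uniformly, and that the surviving leading quantity is precisely the nonnegative maximal averaged advantage, so that the exploration constant $c$ can be factored out cleanly rather than being lost in a looser state-averaged bound. Establishing the averaged-Q Lipschitz bound needed for the error term is the secondary, more routine, ingredient, but it still demands care in tracking how a perturbation of each local policy propagates through the horizon-telescoped Bellman recursion \eqref{eq:bellman-eq}.
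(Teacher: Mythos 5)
Your proposal follows essentially the same route as the paper's proof: the same interpolating policies $\widetilde\pi^i$ with telescoping through the potential, the same reduction via \eqref{eq:MPG-def} and Lemma \ref{lemma:performance-difference}, the same split into a main term (handled by comparing the empirically greedy and truly greedy actions, costing $2\eta_t nH\epsilon_Q$, then factoring out $c$ via nonnegativity of the maximal averaged advantage) and an error term controlled by a Lipschitz bound on the averaged Q-function (the paper's Lemma \ref{lemma:smoothness-averaged-Q}) together with $\|\pi_{i,h}^{(t+1)}-\pi_{i,h}^{(t)}\|_1\le 2\eta_t$. The argument is correct and matches the paper's proof in both structure and constants.
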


\begin{lemma}\label{lemma:NE-gap-bound}
    \begin{equation*}
        \NEgap_i(\pi) \le \sum_{h=1}^H\max_{s_i,a_i}\oA^\pi_{i,h}(s_i,a_i)
    \end{equation*}
\begin{proof}
From performance difference lemma:
\begin{align*}
    J_i(\pi_i',\pi_{-i}) - J_i(\pi_i,\pi_{-i}) &= \sum_{h=1}^H \sum_{s_i,a_i}d^{\pi_i'}_{i,h}(s_i)(\pi'_{i,h}(a_i|s_i) - \pi_{i,h}(a_i|s_i))\oQ^\pi_{i,h}(s_i,a_i)\\
    &= \sum_{h=1}^H \sum_{s_i,a_i}d^{\pi_i'}_{i,h}(s_i)\pi'_{i,h}(a_i|s_i)\oA^\pi_{i,h}(s_i,a_i)\\
    &\le \sum_{h=1}^H \max_{s_i,a_i}\oA^\pi_{i,h}(s_i,a_i)\\
    \Longrightarrow~~ \NEgap_i(\pi) &=  \max_{\pi_i'}J_i(\pi_i',\pi_{-i}) - J_i(\pi_i,\pi_{-i}) \le \sum_{h=1}^H \max_{s_i,a_i}\oA^\pi_{i,h}(s_i,a_i)
\end{align*}
\end{proof}
\end{lemma}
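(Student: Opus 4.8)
The plan is to invoke the performance difference lemma (Lemma \ref{lemma:performance-difference}) and then convert the resulting expression into one involving the averaged advantage function $\oA^\pi_{i,h}$, which, being a centered quantity, can be bounded termwise with ease. First I would fix an arbitrary deviation $\pi_i'$ and apply Lemma \ref{lemma:performance-difference} to write $J_i(\pi_i',\pi_{-i}) - J_i(\pi_i,\pi_{-i})$ as a weighted sum over horizons $h$ and local state-action pairs $(s_i,a_i)$, pairing the weights $d^{\pi_i'}_{i,h}(s_i)(\pi'_{i,h}(a_i|s_i) - \pi_{i,h}(a_i|s_i))$ against $\oQ^\pi_{i,h}(s_i,a_i)$.

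The key algebraic step is the identity $\sum_{a_i}(\pi'_{i,h}(a_i|s_i) - \pi_{i,h}(a_i|s_i))\oQ^\pi_{i,h}(s_i,a_i) = \sum_{a_i}\pi'_{i,h}(a_i|s_i)\oA^\pi_{i,h}(s_i,a_i)$. This holds because subtracting the baseline $\sum_{a_i'}\pi_{i,h}(a_i'|s_i)\oQ^\pi_{i,h}(s_i,a_i')$ that defines the advantage is harmless once we multiply by $\pi'_{i,h}(\cdot|s_i)$, since $\pi'_{i,h}(\cdot|s_i)$ has unit mass. Substituting this into the performance difference expression rewrites the inner action-sum as an average of advantages taken under $\pi'_{i,h}$.

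From here the bound follows by two applications of the principle that an average is at most its maximum. Since $\pi'_{i,h}(\cdot|s_i)$ is a distribution over actions, $\sum_{a_i}\pi'_{i,h}(a_i|s_i)\oA^\pi_{i,h}(s_i,a_i) \le \max_{a_i}\oA^\pi_{i,h}(s_i,a_i)$; and since $d^{\pi_i'}_{i,h}$ is a distribution over states, averaging these maxima over $s_i$ is at most $\max_{s_i,a_i}\oA^\pi_{i,h}(s_i,a_i)$. Summing over $h$ yields $J_i(\pi_i',\pi_{-i}) - J_i(\pi_i,\pi_{-i}) \le \sum_{h=1}^H \max_{s_i,a_i}\oA^\pi_{i,h}(s_i,a_i)$ for every deviation $\pi_i'$. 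Because the right-hand side does not depend on $\pi_i'$, taking the maximum over $\pi_i'$ on the left gives precisely $\NEgap_i(\pi) \le \sum_{h=1}^H \max_{s_i,a_i}\oA^\pi_{i,h}(s_i,a_i)$.

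I do not anticipate a serious obstacle: the statement is essentially a repackaging of the performance difference lemma through the advantage function, and its entire content lies in the cancellation identity and the elementary averaging bounds. The only mildly delicate point is verifying the advantage identity carefully, ensuring that the baseline term's independence of $a_i$ is what lets it factor out against the unit-mass deviation $\pi'_{i,h}(\cdot|s_i)$.
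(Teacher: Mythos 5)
Your proposal is correct and takes essentially the same route as the paper's proof: both apply Lemma \ref{lemma:performance-difference}, use the unit-mass cancellation to rewrite $\sum_{a_i}(\pi'_{i,h}(a_i|s_i)-\pi_{i,h}(a_i|s_i))\oQ^\pi_{i,h}(s_i,a_i)$ as $\sum_{a_i}\pi'_{i,h}(a_i|s_i)\oA^\pi_{i,h}(s_i,a_i)$, bound the resulting average by $\max_{s_i,a_i}\oA^\pi_{i,h}(s_i,a_i)$, and conclude by maximizing over the deviation $\pi_i'$. The only cosmetic difference is that you split the averaging bound into two steps (actions, then states) where the paper does it in one.
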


\subsection{Proof of Lemma \ref{lemma:Q-hat-bound}}

\begin{proof}
We prove by backward induction, the inequality holds naturally for $H+1$, assume that it holds for $h+1, h+2, \dots, H+1$, then
    \begin{align*}
    \oQ_{i,h}^\pi(s_i,a_i) &= \orh(s_i,a_i) + \sum_{s_i',a_i'}\Ph(s_i'|s_i,a_i)\pi_{i,h+1}(a_i'|s_i')\oQ_{i,h+1}^\pi(s_i',a_i')\\
     \hQ_{i,h}^\pi(s_i,a_i) &= \hrh(s_i,a_i) + \sum_{s_i',a_i'}\hPh(s_i'|s_i,a_i)\pi_{i,h+1}(a_i'|s_i')\hQ_{i,h+1}^\pi(s_i',a_i')\\
     \Longrightarrow~(\hQh-\oQh)(s_i,a_i)&=(\hrh-\orh)(s_i,a_i) + \sum_{s_i',a_i'}\Ph(s_i'|s_i,a_i)\pi_{i,h+1}(a_i'|s_i')(\hQ_{i,h+1}^\pi-\oQ_{i,h+1}^\pi)(s_i',a_i')\\
     &\quad + \sum_{s_i',a_i'} (\hPh-\Ph)(s_i'|s_i,a_i)\pi_{i,h+1}(a_i'|s_i')\hQ_{i,h+1}^\pi(s_i',a_i')\\
     \Longrightarrow~\left|(\hQh-\oQh)(s_i,a_i)\right|&\le \epsilon_r+ \max_{s_i',a_i'}\left|(\hQ_{i,h+1}^\pi-\oQ_{i,h+1}^\pi)(s_i',a_i')\right| + \epsilon_P H|S_i|\\
     &\le \epsilon_r + \epsilon_r (H-h) + \epsilon_P H(H-h)|S_i| + \epsilon_P H|S_i|\\
     & = \epsilon_r(H+1-h) + \epsilon_P H(H+1-h)|S_i|,
\end{align*}
which completes the proof.
\end{proof}

\subsection{Proof of Lemma \ref{lemma:bound-r-P-hat}}
\begin{proof}

[Proof for reward estimation $\orh$]
\begin{align*}
    &\left\{\left|\hrh(s_i,a_i)-\orh(s_i,a_i)\right|\ge\epsilon\right\}\\
    &\subseteq \left\{\sum_{k=1}^Tr_{i,h}(s_i,a_i,s_{-i,h}^{(k)},a_{-i,h}^{(k)})\mathbf{1}\{s_{i,h}^{(k)}=s_i\} - (\orh(s_i,a_i)+\epsilon)\mathbf{1}\{s_{i,h}^{(k)}=s_i\}\ge0\right\}\mathop{\cup}\left\{\sum_{k=1}^{T}\mathbf{1}\{s_{i,h}^{(k)}=s_i\}=0 \right\}
\end{align*}
Let:
\begin{align*}
    X_{k}&:=r_{i,h}(s_i,a_i,s_{-i,h}^{(k)},a_{-i,h}^{(k)})\mathbf{1}\{s_{i,h}^{(k)}=s_i\} - (\orh(s_i,a_i)+\epsilon)\mathbf{1}\{s_{i,h}^{(k)}=s_i\}\\
    Y_{k}&:= X_{k} - \bE[X_k]
\end{align*}
Because $\epsilon\le 1$, it is easy to verify that $|X_k|\le2$. We have that:
$$|Y_k| \le |X_k|+\bE[|X_k|]\le 4.$$
Further,
\begin{align*}
    \bE[X_k]&= \bE[r_{i,h}(s_i,a_i,s_{-i,h}^{(k)},a_{-i,h}^{(k)})\mathbf{1}\{s_{i,h}^{(k)}=s_i\} - (\orh(s_i,a_i)+\epsilon)\mathbf{1}\{s_{i,h}^{(k)}=s_i\}]\\
    &= -\epsilon\bE\mathbf{1}\{s_{i,h}^{(k)}=s_i\}\le -\epsilon c
\end{align*}
the first line to the second line of the equation is derived by the fact that $s_i^{(k)}, a_i^{t}$ and $s_{-i}^{t}, a_{-i}^{(k)}$ are independent and that:
\begin{align*}
    &\quad \bE[r_{i,h}(s_i,a_i,s_{-i,h}^{(k)},a_{-i,h}^{(k)})\mathbf{1}\{s_{i,h}^{(k)}=s_i\}] = \bE[r_{i,h}(s_i,a_i,s_{-i,h}^{(k)},a_{-i,h}^{(k)})]\bE[\mathbf{1}\{s_{i,h}^{(k)}=s_i\}]\\
    &= \sum_{s_{-i}} \prod_{j\neq i} d_{j,h}^{\pi_j}(s_j)\sum_{a_{-i}}  \prod_{j\neq i}\pi_{j,h}(a_j|s_j) r_{i,h}(s_i,s_{-i}, a_i, a_{-i})\bE[\mathbf{1}\{s_{i,h}^{(k)}=s_i\}]\\
    &= \bE[\orh(s_i,a_i)\mathbf{1}\{s_t=s,a_{i,t}=a_t\}]
\end{align*}
According to Hoeffding inequality:
\begin{align*}
    &\quad \Pr\left(\left\{\sum_{k=1}^Tr_{i,h}(s_i,a_i,s_{-i,h}^{(k)},a_{-i,h}^{(k)})\mathbf{1}\{s_{i,h}^{(k)}=s_i\} - (\orh(s_i,a_i)+\epsilon)\mathbf{1}\{s_{i,h}^{(k)}=s_i\}\ge0\right\}\right) \\
    &= \Pr\left(\sum_{k=1}^{T} X_k\ge0\right)\\
    &=\Pr\left(\sum_{k=1}^{T} Y_k\ge -\sum_{k=1}^{T}\bE[X_k]\right)\\
    &\le \Pr\left(\frac{1}{T}\sum_{k=1}^{T} Y_k\ge -\epsilon c\right)\\
    &\le \exp\left(-\frac{\epsilon^2 c^2 T}{32}\right)
\end{align*}
Further,
\begin{align*}
    &\quad \Pr\left(\left\{\sum_{k=1}^{T}\mathbf{1}\{s_{i,h}^{(k)}=s_i\}=0 \right\}\right) \\
    & = \Pr\left(\left\{\sum_{k=1}^{T}\mathbf{1}\{s_{i,h}^{(k)}=s_i\} - \bE\left[\mathbf{1}\{s_{i,h}^{(k)}=s_i\}\right]=-\sum_{k=1}^T\bE\left[\mathbf{1}\{s_{i,h}^{(k)}=s_i\}\right] \right\}\right)\\
    &\le \Pr\left(\left\{\frac{1}{T}\sum_{k=1}^{T}\mathbf{1}\{s_{i,h}^{(k)}=s_i\} - \bE\left[\mathbf{1}\{s_{i,h}^{(k)}=s_i\}\right]\le-c\right\}\right)\\
    &\le \exp\left(-\frac{c^2T}{8}\right)
\end{align*}
Thus
\begin{align*}
    &\quad \Pr\left(\hrh(s_i,a_i)-\orh(s_i,a_i)\ge\epsilon  \right)\\ &\le \Pr\left(\left\{\sum_{k=1}^Tr_{i,h}(s_i,a_i,s_{-i,h}^{(k)},a_{-i,h}^{(k)})\mathbf{1}\{s_{i,h}^{(k)}=s_i\} - (\orh(s_i,a_i)+\epsilon)\mathbf{1}\{s_{i,h}^{(k)}=s_i\}\ge0\right\}\right) \\
    &\qquad + \Pr\left(\left\{\sum_{k=1}^{T}\mathbf{1}\{s_{i,h}^{(k)}=s_i\}=0 \right\}\right)\\
    &\le 2\exp\left(-\frac{\epsilon^2 c^2 T}{32}\right)
\end{align*}
Similarly
\begin{align*}
    \Pr\left(\orh(s_i,a_i)-\hrh(s_i,a_i)\ge\epsilon  \right)\le 2\exp\left(-\frac{\epsilon^2 c^2 T}{32}\right)\\
    \Longrightarrow \Pr\left(\left|\hrh(s_i,a_i)-\orh(s_i,a_i)\right|\ge\epsilon  \right)\le4\exp\left(-\frac{\epsilon^2 c^2 T}{32}\right)
\end{align*}
which completes the proof.

[Proof for transition probability estimation $\Ph$]
The proof resembles the proof for reward estimation.
\begin{align*}
    &\quad\left\{\hPh(s_i'|s_i,a_i)-\Ph(s_i'|s_i,a_i)\ge\epsilon \right\}\\
    &\subseteq \left\{\sum_{k=1}^{T} \mathbf{1}\{s_{i,h+1}^{(k)} = s_i', s_{i,h}^{(k)}=s_i, a_{i,h}^{(k)} = a_i\}- (\Ph(s_i'|s_i,a_i)+\epsilon)\mathbf{1}\{s_{i,h}^{(k)}=s_i, a_{i,h}^{(k)} = a_i\}\ge0\right\}\\
    &\qquad\mathop{\cup}\left\{\sum_{k=1}^{T}\mathbf{1}\{s_{i,h}^{(k)}=s_i, a_{i,h}^{(k)} = a_i\}=0 \right\}
\end{align*}
Let:
\begin{align*}
    X_{k}&:=\mathbf{1}\{s_{i,h+1}^{(k)} = s_i', s_{i,h}^{(k)}=s_i, a_{i,h}^{(k)} = a_i\}- (\Ph(s_i'|s_i,a_i)+\epsilon)\mathbf{1}\{s_{i,h}^{(k)}=s_i, a_{i,h}^{(k)} = a_i\}\\
    Y_{k}&:= X_{k} - \bE[X_k]
\end{align*}
Because $\epsilon\le 1$, it is easy to verify that $|X_k|\le2$. We have that:
$$|Y_k| \le |X_k|+\bE[|X_k|]\le 4.$$
Further,
\begin{align*}
    \bE[X_k]&= \bE[\mathbf{1}\{s_{i,h+1}^{(k)} = s_i', s_{i,h}^{(k)}=s_i, a_{i,h}^{(k)} = a_i\}- (\Ph(s_i'|s_i,a_i)+\epsilon)\mathbf{1}\{s_{i,h}^{(k)}=s_i, a_{i,h}^{(k)} = a_i\}]\\
    &= -\epsilon\bE\mathbf{1}\{s_{i,h}^{(k)}=s_i, a_{i,h}^{(k)} = a_i\}\le -\frac{\epsilon c}{|A_i|}
\end{align*}
According to Hoeffding inequality:
\begin{align*}
    &\quad \Pr\left(\left\{\sum_{k=1}^{T} \mathbf{1}\{s_{i,h+1}^{(k)} = s_i', s_{i,h}^{(k)}=s_i, a_{i,h}^{(k)} = a_i\}- (\Ph(s_i'|s_i,a_i)+\epsilon)\mathbf{1}\{s_{i,h}^{(k)}=s_i, a_{i,h}^{(k)} = a_i\}\ge0\right\}\right) \\
    &= \Pr\left(\sum_{k=1}^{T} X_k\ge0\right)\\
    &=\Pr\left(\sum_{k=1}^{T} Y_k\ge -\sum_{k=1}^{T}\bE[X_k]\right)\\
    &\le \Pr\left(\frac{1}{T}\sum_{k=1}^{T} Y_k\ge -\frac{\epsilon c}{|A_i|}\right)\\
    &\le \exp\left(-\frac{\epsilon^2 c^2 T}{32|A_i|^2}\right)
\end{align*}
Further,
\begin{align*}
    &\quad \Pr\left(\left\{\sum_{k=1}^{T}\mathbf{1}\{s_{i,h}^{(k)}=s_i,a_{i,h}^{(k)} = a_i\}=0 \right\}\right) \\
    & = \Pr\left(\left\{\sum_{k=1}^{T}\mathbf{1}\{s_{i,h}^{(k)}=s_i,a_{i,h}^{(k)} = a_i\} - \bE\left[\mathbf{1}\{s_{i,h}^{(k)}=s_i,a_{i,h}^{(k)} = a_i\}\right]=-\sum_{k=1}^T\bE\left[\mathbf{1}\{s_{i,h}^{(k)}=s_i,a_{i,h}^{(k)} = a_i\}\right] \right\}\right)\\
    &\le \Pr\left(\left\{\frac{1}{T}\sum_{k=1}^{T}\mathbf{1}\{s_{i,h}^{(k)}=s_i,a_{i,h}^{(k)} = a_i\} - \bE\left[\mathbf{1}\{s_{i,h}^{(k)}=s_i,a_{i,h}^{(k)} = a_i\}\right]\le-\frac{c}{|A_i|}\right\}\right)\\
    &\le \exp\left(-\frac{c^2T}{8|A_i|^2}\right)
\end{align*}
Thus
\begin{align*}
    &\quad \Pr\left(\hPh(s_i'|s_i,a_i)-\Ph(s_i'|s_i,a_i)\ge\epsilon \right)\\ &\le\Pr\left(\left\{\sum_{k=1}^{T} \mathbf{1}\{s_{i,h+1}^{(k)} = s_i', s_{i,h}^{(k)}=s_i, a_{i,h}^{(k)} = a_i\}- (\Ph(s_i'|s_i,a_i)+\epsilon)\mathbf{1}\{s_{i,h}^{(k)}=s_i, a_{i,h}^{(k)} = a_i\}\ge0\right\}\right) \\
    &\qquad + \Pr\left(\left\{\sum_{k=1}^{T}\mathbf{1}\{s_{i,h}^{(k)}=s_i,a_{i,h}^{(k)} = a_i\}=0 \right\}\right)\\
    &\le 2\exp\left(-\frac{\epsilon^2 c^2 T}{32|A_i|^2}\right)
\end{align*}
Similarly
\begin{align*}
    \Pr\left(\Ph(s_i'|s_i,a_i)-\hPh(s_i'|s_i,a_i)\ge\epsilon \right)\le 2\exp\left(-\frac{\epsilon^2 c^2 T}{32|A_i|^2}\right)\\
    \Longrightarrow \Pr\left(\left|\Ph(s_i'|s_i,a_i)-\hPh(s_i'|s_i,a_i)\right|\ge\epsilon  \right)\le4\exp\left(-\frac{\epsilon^2 c^2 T}{32|A_i|^2}\right)
\end{align*}
which completes the proof.
\end{proof}

\subsection{Auxiliaries}
\begin{lemma}\label{lemma:smoothness-averaged-Q}
\begin{equation*}
    |\oQ_{i,h}^{\pi'}(s_i,a_i) - \oQ_{i,h}^\pi(s_i,a_i)|\le H \sum_{h'=1}^H \sum_{j=1}^n\|\pi_{j,h'}' - \pi_{j,h}\|_1
\end{equation*}
\begin{proof}
\begin{align*}
    &\quad \oQ_{i,h}^{\pi'}(s_i,a_i) - \oQ_{i,h}^\pi(s_i,a_i)\\
    &= \sum_{s_{-i}} \prod_{j\neq i} d_{j,h}^{\pi_j'}(s_j)\sum_{a_{-i}}  \prod_{j\neq i}\pi_{j,h}'(a_j|s_j) Q_{i,h}^{\pi'}(s_i,s_{-i}, a_i, a_{-i}) - \sum_{s_{-i}} \prod_{j\neq i} d_{j,h}^{\pi_j}(s_j)\sum_{a_{-i}}  \prod_{j\neq i}\pi_{j,h}(a_j|s_j) Q_{i,h}^\pi(s_i,s_{-i}, a_i, a_{-i})\\
    &= \underbrace{\sum_{s_{-i}} \left(\prod_{j\neq i} d_{j,h}^{\pi_j'}(s_j)- \prod_{j\neq i} d_{j,h}^{\pi_j}(s_j)\right)\sum_{a_{-i}}  \prod_{j\neq i}\pi_{j,h}'(a_j|s_j) Q_{i,h}^{\pi'}(s_i,s_{-i}, a_i, a_{-i})}_{\textup{Part I}}\\
    &\quad + \underbrace{\sum_{s_{-i}} \prod_{j\neq i} d_{j,h}^{\pi_j}(s_j)\sum_{a_{-i}}  \left(\prod_{j\neq i}\pi_{j,h}'(a_j|s_j)-\prod_{j\neq i}\pi_{j,h}(a_j|s_j)\right)Q_{i,h}^{\pi'}(s_i,s_{-i}, a_i, a_{-i})}_{\textup{Part II}}\\
    &\quad +\underbrace{\sum_{s_{-i}} \prod_{j\neq i} d_{j,h}^{\pi_j}(s_j)\sum_{a_{-i}}  \prod_{j\neq i}\pi_{j,h}(a_j|s_j)\left(Q_{i,h}^{\pi'}(s_i,s_{-i}, a_i, a_{-i})-Q_{i,h}^{\pi}(s_i,s_{-i}, a_i, a_{-i})\right)}_{\textup{Part III}}.
\end{align*}

\begin{align*}
    |\textup{Part I}|&\le H\sum_{s_{-i}} \left|\prod_{j\neq i} d_{j,h}^{\pi_j'}(s_j)- \prod_{j\neq i} d_{j,h}^{\pi_j}(s_j)\right|\sum_{a_{-i}}  \prod_{j\neq i}\pi_{j,h}'(a_j|s_j)\\
    &= H\sum_{s_{-i}} \left|\prod_{j\neq i} d_{j,h}^{\pi_j'}(s_j)- \prod_{j\neq i} d_{j,h}^{\pi_j}(s_j)\right|\\
    &\le H \sum_{j\neq i} \|d_{j,h}^{\pi'} - d_{j,h}^\pi\|_1\stackrel{\textup{Lemma \ref{lemma:smoothness-d}}}{\le} H\sum_{j\neq i
    }\sum_{\underline h=1}^{h-1}\|\pi_{j,\underline{h}}'-\pi_{j,\underline h}\|_1.
\end{align*}
\begin{align*}
    |\textup{Part II}|&\le \sum_{s_{-i}} \prod_{j\neq i} d_{j,h}^{\pi_j}(s_j)\sum_{a_{-i}}  \left|\prod_{j\neq i}\pi_{j,h}'(a_j|s_j)-\prod_{j\neq i}\pi_{j,h}(a_j|s_j)\right|H\\
    &\stackrel{\textup{Lemma \ref{lemma:L-1-norm-ineq}}}{\le} H\sum_{s_{-i}} \prod_{j\neq i} d_{j,h}^{\pi_j}(s_j)\sum_{j\neq i}\|\pi_{j,h}'(\cdot|s_j) - \pi_{j,h}(\cdot|s_j)\|_1\\
    &= H\sum_{j\neq i}\sum_{s_j}d_{j,h}^{\pi_j}(s_j)\|\pi_{j,h}'(\cdot|s_j) - \pi_{j,h}(\cdot|s_j)\|_1\\
    &\le H\sum_{j\neq i}\|\pi_{j,h}'-\pi_{j,h}\|_1
\end{align*}
\begin{align*}
    |\textup{Part III}|&\le \max_{s,a}|Q_{i,h}^{\pi'}(s,a) - Q_{i,h}^\pi(s,a)|\stackrel{\textup{Lemma \ref{lemma:smoothness-Q}}} {\le}H\sum_{\overline h=h+1}^H\sum_{j=1}^n\|\pi_{j,\overline h}' - \pi_{j,\overline h}\|_1.
\end{align*}
\end{proof}
Combining the bounds we get
\begin{align*}
     |\oQ_{i,h}^{\pi'}(s_i,a_i) - \oQ_{i,h}^\pi(s_i,a_i)| \le |\textup{Part I}| + |\textup{Part II}| + |\textup{Part III}|\le H \sum_{h'=1}^H \sum_{j=1}^n\|\pi_{j,h'}' - \pi_{j,h}\|_1.
\end{align*}
\end{lemma}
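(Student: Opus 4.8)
The plan is to isolate the three distinct channels through which the averaged $Q$-function $\oQ_{i,h}^\pi$ depends on the policy $\pi$, and to bound the contribution of each separately. Recalling Equation \eqref{eq:averaged-Q}, $\oQ_{i,h}^\pi(s_i,a_i)$ is built from the standard $Q$-function $Q_{i,h}^\pi$ by averaging the opponents' states $s_{-i}$ against the product marginal $\prod_{j\neq i} d_{j,h}^{\pi_j}$ and their actions $a_{-i}$ against the product policy $\prod_{j\neq i}\pi_{j,h}$. Hence passing from $\pi$ to $\pi'$ perturbs $\oQ$ through (i) the opponents' state visitation distributions, (ii) the opponents' action distributions at horizon $h$, and (iii) the underlying $Q$-function itself.

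First I would write both $\oQ_{i,h}^{\pi'}$ and $\oQ_{i,h}^{\pi}$ in summation form and insert two hybrid terms, producing a telescoping decomposition into three parts: Part I replaces $\prod_{j\neq i} d_{j,h}^{\pi_j'}$ by $\prod_{j\neq i} d_{j,h}^{\pi_j}$ while keeping $\pi'$ in the action average and the $Q$-factor; Part II then swaps $\prod_{j\neq i}\pi_{j,h}'$ for $\prod_{j\neq i}\pi_{j,h}$; and Part III isolates the residual difference $Q_{i,h}^{\pi'}-Q_{i,h}^{\pi}$. For Parts I and II I would use the crude bound $Q_{i,h}^{\pi'}(s,a)\le H$ (rewards lie in $[0,1]$ over at most $H$ remaining stages), so the $Q$-factor pulls out as a constant $H$ and it remains to control an $\ell_1$ distance between products of probability measures.

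The workhorse here is an elementary product inequality of the form $\|\prod_k p_k - \prod_k q_k\|_1 \le \sum_k \|p_k - q_k\|_1$ for collections of distributions, which I would establish first. Applied to Part II (after integrating out $d_{j,h}^{\pi_j}$) it yields $H\sum_{j\neq i}\|\pi_{j,h}'-\pi_{j,h}\|_1$, touching exactly horizon $h$. Applied to Part I it gives $H\sum_{j\neq i}\|d_{j,h}^{\pi_j'}-d_{j,h}^{\pi_j}\|_1$, which I would then feed into a state-distribution sensitivity lemma bounding $\|d_{j,h}^{\pi_j'}-d_{j,h}^{\pi_j}\|_1$ by the accumulated policy perturbation $\sum_{\underline h < h}\|\pi_{j,\underline h}'-\pi_{j,\underline h}\|_1$ at strictly earlier horizons (proved by unrolling the decoupled transition recursion). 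For Part III I would invoke a $Q$-function sensitivity lemma giving $\max_{s,a}|Q_{i,h}^{\pi'}-Q_{i,h}^{\pi}|\le H\sum_{\overline h > h}\sum_{j}\|\pi_{j,\overline h}'-\pi_{j,\overline h}\|_1$, capturing perturbations at strictly later horizons.

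Finally I would add the three bounds. The key observation is that the three parts touch disjoint horizon ranges — strictly earlier than $h$ (Part I), exactly $h$ (Part II), and strictly later than $h$ (Part III) — so their union is all of $[H]$ and the sum collapses to the claimed $H\sum_{h'=1}^H\sum_{j=1}^n\|\pi_{j,h'}'-\pi_{j,h'}\|_1$. The hard part will be cleanly establishing the two sensitivity lemmas for the state distributions and for the standard $Q$-function; both require a horizon induction in which differences of products of distributions must be controlled, so getting the product $\ell_1$ inequality right is what makes the whole argument go through.
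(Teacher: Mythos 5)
Your proposal is correct and follows essentially the same route as the paper's proof: the identical three-part telescoping decomposition (state-distribution perturbation, action-policy perturbation at horizon $h$, and the residual $Q$-function difference), the crude bound $Q\le H$ combined with a product $\ell_1$ inequality for Parts I and II, and the same two sensitivity lemmas (state-visitation and standard $Q$-function, each by horizon induction) whose disjoint horizon ranges make the final sum collapse. No substantive differences to report.
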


\begin{lemma}\label{lemma:smoothness-d}
    \begin{equation*}
        \|d_{i,h}^{\pi'} - d_{i,h}^\pi\|_1\le \sum_{\underline h=1}^{h-1}\|\pi_{i,\underline{h}}'-\pi_{i,\underline h}\|_1.
    \end{equation*}
    \begin{proof}
        We prove this by induction. Since $d_{i,1}^{\pi'} = d_{i,1}^{\pi}$, thus the inequality holds for $h=1$. Assume that the inequality holds for current $h$, then
        \begin{align*}
            d_{i,h+1}^{\pi}(s_i) &= \sum_{\overline s_i,a_i}P_h(s_i|\overline{s_i},a_i)\pi_{i,h}(a_i|\overline s_i)d_{i,h}^{\pi}(\overline s_i)\\
            d_{i,h+1}^{\pi'}(s_i) &= \sum_{\overline s_i,a_i}P_h(s_i|\overline{s_i},a_i)\pi_{i,h}'(a_i|\overline s_i)d_{i,h}^{\pi'}(\overline s_i)\\
            \Longrightarrow~~\sum_{s_i}|d_{i,h+1}^{\pi'}(s_i)-d_{i,h+1}^{\pi}(s_i)|&\le  \sum_{s_i,\overline s_i,a_i}P_h(s_i|\overline{s_i},a_i)\pi_{i,h}'(a_i|\overline s_i)|d_{i,h}^{\pi'}(\overline s_i)-d_{i,h}^{\pi}(\overline s_i)|\\
            &\quad + \sum_{s_i,\overline s_i,a_i}P_h(s_i|\overline{s_i},a_i)|\pi_{i,h}'(a_i|\overline s_i)-\pi_{i,h}(a_i|\overline s_i)|d_{i,h}^{\pi}(\overline s_i)\\
            &= \sum_{\overline s_i}|d_{i,h}^{\pi'}(\overline s_i)-d_{i,h}^{\pi}(\overline s_i)| + \sum_{\overline s_i,a_i}d_{i,h}^{\pi}(\overline s_i)|\pi_{i,h}'(a_i|\overline s_i)-\pi_{i,h}(a_i|\overline s_i)|\\
            &\le \|d_{i,h}^{\pi'} - d_{i,h}^\pi\|_1 + \|\pi_{i,h}' - \pi_{i,h}\|_1\\
            &\le  \sum_{\underline h=1}^{h}\|\pi_{i,\underline{h}}'-\pi_{i,\underline h}\|_1,
        \end{align*}
        which completes the proof.
    \end{proof}
\end{lemma}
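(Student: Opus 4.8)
The plan is to prove the bound by induction on the horizon $h$, exploiting that, because the dynamics are decoupled, $d_{i,h}^\pi$ is simply the state marginal of a \emph{single-agent} Markov chain driven by $\pi_i$ and the local kernel $P_{i,h}$. Consequently the claim reduces to a standard distribution-drift estimate, and no product-over-agents bookkeeping (of the kind appearing in Lemma~\ref{lemma:smoothness-averaged-Q}) is needed here.

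For the base case $h=1$ I would note that the initial distribution $\rho_i$ does not depend on the policy, so $d_{i,1}^{\pi'} = d_{i,1}^{\pi} = \rho_i$ and both sides vanish (the right-hand side being an empty sum). For the inductive step, assuming the bound at level $h$, I would write the one-step recursion $d_{i,h+1}^{\pi}(s_i) = \sum_{\bar s_i, a_i} P_{i,h}(s_i\mid\bar s_i,a_i)\,\pi_{i,h}(a_i\mid\bar s_i)\,d_{i,h}^{\pi}(\bar s_i)$ for both $\pi$ and $\pi'$, subtract them, and insert the cross term $P_{i,h}\,\pi'_{i,h}\,d_{i,h}^{\pi}$ so as to split the difference into a ``distribution-drift'' piece involving $d_{i,h}^{\pi'}-d_{i,h}^{\pi}$ and a ``policy-drift'' piece involving $\pi'_{i,h}-\pi_{i,h}$.

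Taking $\sum_{s_i}|\cdot|$ and applying the triangle inequality, the key observation is that $P_{i,h}$ and the policies are row-stochastic, i.e.\ they sum to one over their output arguments. Hence summing the first piece over $s_i$ collapses $P_{i,h}$ and $\pi'_{i,h}$ and leaves exactly $\|d_{i,h}^{\pi'}-d_{i,h}^{\pi}\|_1$, while the second piece collapses $P_{i,h}$ and reduces to $\sum_{\bar s_i} d_{i,h}^{\pi}(\bar s_i)\,\|\pi'_{i,h}(\cdot\mid\bar s_i)-\pi_{i,h}(\cdot\mid\bar s_i)\|_1 \le \|\pi'_{i,h}-\pi_{i,h}\|_1$, where the last step uses the definition of the policy $\ell_1$-norm in Equation~\eqref{eq:policy-L-1} as a maximum over local states together with $\sum_{\bar s_i} d_{i,h}^{\pi}(\bar s_i)=1$.

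Combining the two pieces gives $\|d_{i,h+1}^{\pi'}-d_{i,h+1}^{\pi}\|_1 \le \|d_{i,h}^{\pi'}-d_{i,h}^{\pi}\|_1 + \|\pi'_{i,h}-\pi_{i,h}\|_1$, and plugging in the induction hypothesis yields $\sum_{\underline h=1}^{h}\|\pi'_{i,\underline h}-\pi_{i,\underline h}\|_1$, closing the induction. I do not expect a genuine obstacle here: the only point requiring care is the add-and-subtract decomposition and the recognition that a transition kernel acts as an $\ell_1$-nonexpansive (row-stochastic) map, so that neither $P_{i,h}$ nor the policy mixing can inflate the $\ell_1$ distance. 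Once that is set up, the rest is a routine telescoping of the per-step policy perturbations.
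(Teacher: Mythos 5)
Your proposal is correct and is essentially identical to the paper's proof: the same induction on $h$, the same add-and-subtract insertion of the cross term $P_{i,h}\,\pi'_{i,h}\,d_{i,h}^{\pi}$, and the same use of row-stochasticity plus the max-over-states definition of the policy $\ell_1$-norm to collapse each piece into $\|d_{i,h}^{\pi'}-d_{i,h}^{\pi}\|_1 + \|\pi'_{i,h}-\pi_{i,h}\|_1$ before telescoping. No gaps; your assessment that this is a routine $\ell_1$-nonexpansiveness argument is exactly how the paper treats it.
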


\begin{lemma}\label{lemma:smoothness-Q}
    \begin{equation*}
        \max_{s,a}|Q_{i,h}^{\pi'}(s,a) - Q_{i,h}^\pi(s,a)|\le H\sum_{\overline h=h+1}^H\sum_{j=1}^n\|\pi_{j,\overline h}' - \pi_{j,\overline h}\|_1.
    \end{equation*}
    \begin{proof}
        Again, we prove this by induction. Since $Q_{i,H}^{\pi'} = Q_{i,H}^\pi$, the argument naturally holds for $h=H$. Assume that the argument holds for $h+1$, then
        \begin{align*}
            &\quad |Q_{i,h}^{\pi'}(s,a) - Q_{i,h}^\pi(s,a)|= \left|\sum_{\overline s, \overline a} P(\overline s|s,a)\pi_{h+1}'(\overline a|\overline s) Q_{i,h+1}^{\pi'}(\overline s, \overline a) - \sum_{\overline s, \overline a} P(\overline s|s,a)\pi_{h+1}(\overline a|\overline s) Q_{i,h+1}^{\pi}(\overline s, \overline a)\right|\\
            &\le\left|\sum_{\overline s, \overline a} P(\overline s|s,a)\left(\pi_{h+1}'(\overline a|\overline s) - \pi_{h+1}(\overline a|\overline s)\right)Q_{i,h+1}^{\pi'}(\overline s, \overline a) \right|+\left| \sum_{\overline s, \overline a} P(\overline s|s,a)\pi_{h+1}(\overline a|\overline s)\left(Q_{i,h+1}^{\pi'}(\overline s, \overline a)- Q_{i,h+1}^{\pi}(\overline s, \overline a)\right)\right|\\
            &\le H\sum_{\overline s}P(\overline s|s,a)\|\pi_{h+1}'(\cdot|\overline s) - \pi_{h+1}(\cdot|\overline s)\|_1 + \max_{\overline s, \overline a} |Q_{i,h+1}^{\pi'}(\overline s,\overline a) - Q_{i,h+1}^\pi(\overline s,\overline a)|\\
            &\le H\sum_{\overline s}P(\overline s|s,a)\sum_{j=1}^n\|\pi_{j,h+1}'(\cdot|\overline s)-\pi_{j,h+1}(\cdot|\overline s)\|_1 +  \max_{\overline s, \overline a} |Q_{i,h+1}^{\pi'}(\overline s,\overline a) - Q_{i,h+1}^\pi(\overline s,\overline a)|\\
            &\le H\sum_{j=1}^n\|\pi_{j,h+1}'-\pi_{j,h+1}\|_1 + \max_{\overline s, \overline a} |Q_{i,h+1}^{\pi'}(\overline s,\overline a) - Q_{i,h+1}^\pi(\overline s,\overline a)|\\
            &\le H\sum_{\overline h=h+1}^H\sum_{j=1}^n\|\pi_{j,\overline h}' - \pi_{j,\overline h}\|_1.
        \end{align*}
    \end{proof}
\end{lemma}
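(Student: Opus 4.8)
The plan is to prove the bound by backward induction on the horizon $h$, using the standard Bellman recursion for the (non-averaged) $Q$-function,
\begin{equation*}
Q_{i,h}^\pi(s,a) = r_{i,h}(s,a) + \sum_{\overline s,\overline a} P_h(\overline s|s,a)\,\pi_{h+1}(\overline a|\overline s)\,Q_{i,h+1}^\pi(\overline s,\overline a).
\end{equation*}
The base case is $h=H$: here $Q_{i,H}^\pi(s,a)=r_{i,H}(s,a)$ does not depend on the policy, so the left-hand side is $0$, matching the empty sum on the right-hand side, which vanishes.

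For the inductive step I would subtract the two Bellman equations for $\pi'$ and $\pi$. The stage reward $r_{i,h}$ is policy-independent and cancels, leaving only the continuation term; I then add and subtract $\pi_{h+1}(\overline a|\overline s)Q_{i,h+1}^{\pi'}(\overline s,\overline a)$ to split it into two pieces. The first piece (Part A) isolates the one-step policy change $\pi_{h+1}'-\pi_{h+1}$ weighted by the future value $Q_{i,h+1}^{\pi'}$, and the second (Part B) isolates the future-value change $Q_{i,h+1}^{\pi'}-Q_{i,h+1}^{\pi}$ weighted by $\pi_{h+1}$.

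To handle Part A I would use the crude bound $\max_{s,a}|Q_{i,h+1}^{\pi'}(s,a)|\le H$ (rewards lie in $[0,1]$ and at most $H$ stages remain), factor the value out of the $\overline a$-summation, and bound the resulting $\ell_1$ distance $\|\pi_{h+1}'(\cdot|\overline s)-\pi_{h+1}(\cdot|\overline s)\|_1$ between the two \emph{joint} policies by $\sum_{j=1}^n\|\pi_{j,h+1}'(\cdot|\overline s_j)-\pi_{j,h+1}(\cdot|\overline s_j)\|_1$. This step is exactly the product-distribution $\ell_1$ inequality already invoked in the proof of Lemma~\ref{lemma:smoothness-averaged-Q}, and combined with the max-over-states definition of $\|\cdot\|_1$ and $\sum_{\overline s}P_h(\overline s|s,a)=1$ it collapses Part A to $H\sum_{j=1}^n\|\pi_{j,h+1}'-\pi_{j,h+1}\|_1$. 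For Part B, since $P_h(\cdot|s,a)$ and $\pi_{h+1}(\cdot|\overline s)$ are probability distributions, I would bound it by $\max_{\overline s,\overline a}|Q_{i,h+1}^{\pi'}(\overline s,\overline a)-Q_{i,h+1}^{\pi}(\overline s,\overline a)|$ and apply the induction hypothesis at $h+1$, yielding $H\sum_{\overline h=h+2}^H\sum_{j=1}^n\|\pi_{j,\overline h}'-\pi_{j,\overline h}\|_1$. Adding the two bounds merges the index $h+1$ into the tail sum, giving exactly $H\sum_{\overline h=h+1}^H\sum_{j=1}^n\|\pi_{j,\overline h}'-\pi_{j,\overline h}\|_1$ and closing the induction.

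The only nonroutine step is the product-policy reduction in Part A, namely bounding the total-variation (i.e.\ $\ell_1$) distance between the factorized joint policies by the sum of the per-agent distances; everything else is mechanical manipulation of the Bellman recursion together with triangle-inequality estimates and the fact that $P_h$ and $\pi_{h+1}$ are normalized. Since that product-distribution bound is a standard fact about factorized distributions and is the same one already used earlier in the appendix, I anticipate no genuine obstacle beyond careful bookkeeping of the horizon indices.
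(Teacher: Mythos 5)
Your proposal is correct and follows essentially the same route as the paper's own proof: the same backward induction with base case $h=H$, the same add-and-subtract decomposition into a one-step policy-difference term and a future-value-difference term, the same crude bound $|Q_{i,h+1}^{\pi'}|\le H$, and the same reduction of the joint-policy $\ell_1$ distance to a per-agent sum via the product-distribution inequality (Lemma~\ref{lemma:L-1-norm-ineq}). There is nothing to correct.
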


\begin{lemma}\label{lemma:lnT/T-auxiliary}
    For any $x>0$ and $\epsilon\le 1$, we have that any
    \begin{equation*}
        T \ge \frac{16\left(\frac{x+1}{2}+\ln(4\epsilon^{-1})\right)^2}{\epsilon^2}
    \end{equation*}
    satisfies the following inequality:
    \begin{equation*}
        \frac{x+1+\ln(T)}{\sqrt{T}}\le \epsilon
    \end{equation*}
\begin{proof}
    It is not hard to verify $\frac{x+1+\ln(T)}{\sqrt{T}}$ monotonically decreases with $T\geq 4$. Thus for $T \ge \frac{16\left(\frac{x+1}{2}+\ln(4\epsilon^{-1})\right)^2}{\epsilon^2}$,
    \begin{align*}
        \frac{x+1+\ln(T)}{\sqrt{T}}&\le \frac{\epsilon}{2x + 2+ 4\ln(4\epsilon^{-1})} \left(x + 1+\ln\left( \frac{16\left(\frac{x+1}{2}+\ln(4\epsilon^{-1})\right)^2}{\epsilon^2}\right)\right)\\
        &= \frac{\epsilon}{2x + 2+ 4\ln(4\epsilon^{-1})} \left(x + 1 + 2\ln\left(\frac{x+1}{2}+\ln(4\epsilon^{-1})\right)+2\ln(4\epsilon^{-1})\right)\\
        &\le \frac{\epsilon}{2x + 2+4\ln(4\epsilon^{-1})} \left(x + 1 + 2\left(\frac{x+1}{2}+\ln(4\epsilon^{-1})\right)+2\ln(4\epsilon^{-1})\right)\\
        &= \frac{\epsilon}{2x + 2+4\ln(4\epsilon^{-1})} \left(2x +2+4\ln(4\epsilon^{-1})\right)=\epsilon
    \end{align*}
\end{proof}
\end{lemma}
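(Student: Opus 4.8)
The plan is to turn this into a single boundary check by first establishing monotonicity. Define $g(T):=\frac{x+1+\ln T}{\sqrt T}$. I would differentiate to get $g'(T)=T^{-3/2}\bigl(1-\tfrac12(x+1+\ln T)\bigr)$, which is negative exactly when $x+1+\ln T>2$. Since $x>0$ gives $x+1>1$ and $T\ge 4$ gives $\ln T\ge\ln 4>1$, this holds throughout $[4,\infty)$, so $g$ is decreasing there. In passing I would verify that the stated threshold $T_0:=\frac{16\left(\frac{x+1}{2}+\ln(4\epsilon^{-1})\right)^2}{\epsilon^2}$ satisfies $T_0\ge 4$: with $\epsilon\le 1$ we have $\ln(4\epsilon^{-1})\ge\ln 4$ and $\frac{x+1}{2}>\frac12$, so the squared factor exceeds $2$ and $T_0>16\cdot 2>4$. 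Hence the monotonicity region contains all $T\ge T_0$.

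Given monotonic decrease, it suffices to show $g(T_0)\le\epsilon$; every $T\ge T_0$ then inherits the bound automatically. Writing $c:=\frac{x+1}{2}+\ln(4\epsilon^{-1})$, I would substitute $T_0=\frac{16c^2}{\epsilon^2}$ directly, using $\sqrt{T_0}=\frac{4c}{\epsilon}$ and $\ln T_0=2\ln c+2\ln(4\epsilon^{-1})$ (the latter because $\ln 16-2\ln\epsilon=2\ln(4\epsilon^{-1})$). This turns $g(T_0)$ into $\frac{\epsilon\bigl(x+1+2\ln c+2\ln(4\epsilon^{-1})\bigr)}{2x+2+4\ln(4\epsilon^{-1})}$, where the denominator is exactly $4c$.

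The final and only delicate point is that $T_0$ reappears inside its own logarithm through the term $\ln c$, so I need to control $\ln\!\left(\frac{x+1}{2}+\ln(4\epsilon^{-1})\right)$ without circularity. I would apply the elementary inequality $\ln y\le y$ (valid for all $y>0$) with $y=c$, which replaces $2\ln c$ by $x+1+2\ln(4\epsilon^{-1})$. The numerator then collapses to $2x+2+4\ln(4\epsilon^{-1})$, cancelling the denominator and yielding $g(T_0)\le\epsilon$, which completes the argument. The thing to watch is simply that the slack in $\ln y\le y$ is generous enough to absorb this self-reference, so no fine tuning of the constants $16$ or $4\epsilon^{-1}$ is needed; a crude bound already closes the gap.
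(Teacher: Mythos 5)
Your proposal is correct and follows essentially the same route as the paper's proof: evaluate the function at the threshold $T_0$ (where $\sqrt{T_0}=4c/\epsilon$ and $\ln T_0 = 2\ln c + 2\ln(4\epsilon^{-1})$ with $c=\frac{x+1}{2}+\ln(4\epsilon^{-1})$), then kill the self-referential term $2\ln c$ via $\ln y\le y$ so the numerator collapses to the denominator $4c$. The only difference is that you supply two details the paper merely asserts or leaves implicit — the derivative computation establishing monotonic decrease on $[4,\infty)$, and the check that $T_0\ge 4$ so the monotonicity region actually applies — which makes your write-up slightly more complete but not a different argument.
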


\begin{lemma}\label{lemma:L-1-norm-ineq}
For any two groups of probability measures $\{P_i\}_{i=1}^n, \{P_i'\}_{i=1}^n$ such that $P_i, P_i'$ are probability measures on a finite space $\Omega_i$ for $i=1,2,\dots,n$. Then
\begin{equation*}
\sum_{\{\omega_i\in\Omega_i\}_{i=1}^n} \left|\prod_{i=1}^n P_i'(\omega_i) - \prod_{i=1}^n P_i(\omega_i)\right| \le \sum_{i=1}^n \|P_i' - P_i\|_1,       
\end{equation*}
where the $\ell_1$-norm is defined as $\|P_i' - P_i\|_1:=\sum_{\omega_i\in\Omega_i}\left|P_i'(\omega) - P_i(\omega)\right|.$
\begin{proof}
    We'll prove by induction on $n$. When $n=1$, the inequality holds trivially. For $n\ge 2$, suppose that the inequality holds for $n-1$, then
    \begin{align*}
        &\quad\sum_{\{\omega_i\in\Omega_i\}_{i=1}^n} \left|\prod_{i=1}^n P_i'(\omega_i) - \prod_{i=1}^n P_i(\omega_i)\right| = \sum_{\omega_n\in\Omega_n} \sum_{\{\omega_i\in\Omega_i\}_{i=1}^{n-1}} \left|P_n'(\omega_n)\prod_{i=1}^{n-1} P_i'(\omega_i) - P_n(\omega_n)\prod_{i=1}^{n-1} P_i(\omega_i)\right|\\
        &\le \sum_{\omega_n\in\Omega_n}P_n'(\omega_n)\sum_{\{\omega_i\in\Omega_i\}_{i=1}^{n-1}} \left|\prod_{i=1}^{n-1} P_i'(\omega_i) - \prod_{i=1}^n P_i(\omega_i)\right| + \sum_{\omega_n\in\Omega_n} \left|P_n'(\omega_n) - P_n(\omega_n)\right|\sum_{\{\omega_i\in\Omega_i\}_{i=1}^{n-1}}\prod_{i=1}^n P_i(\omega_i)\\
       & = \sum_{\{\omega_i\in\Omega_i\}_{i=1}^{n-1}} \left|\prod_{i=1}^{n-1} P_i'(\omega_i) - \prod_{i=1}^{n-1} P_i(\omega_i)\right| + \sum_{\omega_n\in\Omega_n} \left|P_n'(\omega_n) - P_n(\omega_n)\right|\\
       &\le \sum_{i=1}^{n-1} \|P_i' - P_i\|_1 + \|P_n' - P_n\|_1 = \sum_{i=1}^n \|P_i' - P_i\|_1.
    \end{align*}
\end{proof}
\end{lemma}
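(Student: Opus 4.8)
The plan is to prove the bound by induction on the number of factors $n$, where the crucial structural fact is that each $P_i$ and $P_i'$ is a probability measure, so its total mass equals one. For the base case $n=1$ both sides coincide with $\|P_1'-P_1\|_1$, so there is nothing to prove.

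For the inductive step, assuming the claim for $n-1$, I would factor out the last coordinate and insert a single cross term. Writing $\prod_{i=1}^n P_i'(\omega_i)=P_n'(\omega_n)\prod_{i=1}^{n-1}P_i'(\omega_i)$ and likewise for the unprimed product, I add and subtract the hybrid term $P_n'(\omega_n)\prod_{i=1}^{n-1}P_i(\omega_i)$ to obtain
\[
\prod_{i=1}^n P_i'(\omega_i)-\prod_{i=1}^n P_i(\omega_i)=P_n'(\omega_n)\Bigl(\prod_{i=1}^{n-1}P_i'(\omega_i)-\prod_{i=1}^{n-1}P_i(\omega_i)\Bigr)+\bigl(P_n'(\omega_n)-P_n(\omega_n)\bigr)\prod_{i=1}^{n-1}P_i(\omega_i).
\]
Applying the triangle inequality and summing over all $\{\omega_i\}_{i=1}^n$ then splits the left-hand side into two pieces that can be handled separately.

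In the first piece the sum over $\omega_n$ of $P_n'(\omega_n)$ equals one, leaving exactly the $(n-1)$-factor quantity $\sum_{\{\omega_i\}_{i=1}^{n-1}}\bigl|\prod_{i=1}^{n-1}P_i'(\omega_i)-\prod_{i=1}^{n-1}P_i(\omega_i)\bigr|$, which the induction hypothesis bounds by $\sum_{i=1}^{n-1}\|P_i'-P_i\|_1$. In the second piece the inner sum over $\{\omega_i\}_{i=1}^{n-1}$ of $\prod_{i=1}^{n-1}P_i(\omega_i)$ equals one, leaving $\sum_{\omega_n}|P_n'(\omega_n)-P_n(\omega_n)|=\|P_n'-P_n\|_1$. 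Adding the two contributions yields $\sum_{i=1}^n\|P_i'-P_i\|_1$, completing the induction.

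The argument is in essence a hybrid/telescoping decomposition, and I do not anticipate any genuine obstacle. The only point requiring care is keeping the normalization bookkeeping straight: one must consistently exploit that each marginal sums to one, so that summing out the coordinates that are \emph{not} being perturbed contributes a factor of exactly one rather than something larger. (Equivalently, one could unroll the recursion directly as a telescoping sum over the $n$ hybrids $\prod_{i\le k}P_i'\prod_{i>k}P_i$, bounding each consecutive difference by $\|P_k'-P_k\|_1$, but the inductive phrasing is cleaner to write out.)
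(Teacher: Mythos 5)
Your proof is correct and follows essentially the same route as the paper: induction on $n$, inserting the hybrid term $P_n'(\omega_n)\prod_{i=1}^{n-1}P_i(\omega_i)$, applying the triangle inequality, and using that each measure has total mass one so each piece collapses to the induction hypothesis plus $\|P_n'-P_n\|_1$. No gaps; this matches the paper's argument step for step.
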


\end{document}